\documentclass[11pt,onecolumn]{IEEEtran}
\usepackage[T1]{fontenc}
\usepackage[utf8]{inputenc}
\usepackage{lmodern}
\usepackage{amsmath,amssymb,amsthm,mathtools}
\usepackage{enumitem}
\usepackage{microtype}
\usepackage{cite}
\usepackage[hidelinks]{hyperref}

\newtheorem{theorem}{Theorem}[section]
\newtheorem{lemma}[theorem]{Lemma}
\newtheorem{proposition}[theorem]{Proposition}
\newtheorem{corollary}[theorem]{Corollary}
\theoremstyle{definition}
\newtheorem{definition}[theorem]{Definition}

\newtheorem{remark}[theorem]{Remark}

\newcommand{\PR}{\mathrm{PR}}
\newcommand{\FS}{\mathrm{FS}}
\newcommand{\comp}{\mathrm{comp}}
\newcommand{\MLR}{\operatorname{MLR}}
\newcommand{\Dim}{\operatorname{Dim}}
\newcommand{\supp}{\operatorname{supp}}

\newcommand{\ind}{\mathbf 1}
\newcommand{\bits}{\{0,1\}}
\newcommand{\orcid}[1]{\href{https://orcid.org/#1}{ {\includegraphics[scale=0.5]{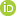}}}}

\title{Universal Individual-Sequence Prediction with a Primitive-Recursive Superpredictor}
\author{Amir Leshem\thanks{The author is with Faculty of Engineering, Bar-Ilan University, Ramat Gan 52900, Israel. Email: amir.leshem@biu.ac.il. This research was partially supported by grant ISF 2197/22.}\textsuperscript{\orcid{0000-0002-2265-7463}}, ~\IEEEmembership{Fellow,~IEEE}}
\date{}

\begin{document}
\maketitle
\begin{abstract}
We study sequential prediction of individual binary sequences under zero-one
loss.  No computable master can compete on every sequence with all total
computable predictors.  We therefore consider rational-valued
primitive-recursive forecasters, a broad syntactically enumerable class
containing finite-state, context-based, and Prediction by Partial Matching
(PPM) rules.

We construct a computable probabilistic predictor with an explicit sublinear
regret bound relative to every primitive-recursive forecaster on every
individual sequence.

We further prove that the PPM predictor is primitive recursive.  Consequently,
our predictor attains the infinite-past Bayes error on every Martin-L\"of
random realization of every computable stationary ergodic binary source.  This
optimality extends to finitely many independent such sources interleaved
according to an arbitrary primitive-recursive schedule.  Finally, we establish
strict separations from finite-state prediction and from every fixed
primitive-recursive predictor.

\begin{IEEEkeywords}
Universal prediction, individual sequences, prediction with expert advice,
primitive recursive functions, Kolmogorov complexity, PPM.
\end{IEEEkeywords}
\end{abstract}
\section{Introduction}
\label{sec:introduction}

Universal prediction belongs to the broader individual-sequence approach
initiated by Ziv and Lempel in their work on finite-sequence complexity and
universal compression \cite{LempelZiv1976,ZivLempel1978}.  In this approach,
the observed sequence is treated as a deterministic object, without assuming
that it was generated by a known probabilistic source, and performance is
measured relative to a prescribed class of admissible machines or decision
rules.  Merhav's recent overview \cite{MerhavZivIndividualSequence} places this
program within a hierarchy of progressively weaker modeling assumptions: from
known memoryless and parametric sources, through Markov, finite-state, and
stationary ergodic source classes, to the individual-sequence setting in which
no stochastic mechanism is postulated for the data.  Universal prediction asks
for a single sequential rule that performs asymptotically as well as the best
predictor in the chosen reference class, without knowing in advance which rule
is appropriate for the observed sequence.

A parallel hierarchy concerns the complexity assigned to an individual
sequence.  Finite-state complexity is operational and computable, but it can
regard an algorithmically simple sequence as maximally complex.  A canonical
example is the counting sequence obtained by concatenating all binary words:
it has a short programmatic description and therefore vanishing normalized
Kolmogorov complexity, while its finite-state complexity is maximal.  By
contrast, Kolmogorov complexity captures unrestricted algorithmic structure
but is itself uncomputable.  This contrast suggests that between finite-state
mechanisms and unrestricted computation there may be useful intermediate
classes that are substantially more expressive than finite-state rules while
retaining enough effective structure to support universal procedures.  The
same question arises in prediction: one seeks a benchmark rich enough to
recognize algorithmic regularity, but sufficiently structured that a single
computable master can compete with all of its members.

Within the individual-sequence framework, Feder, Merhav, and Gutman developed
the fundamental theory of sequential prediction relative to finite-state
predictors and constructed randomized schemes attaining finite-state
predictability \cite{FederMerhavGutman1992}.  Merhav and Feder later surveyed
and extended this information-theoretic framework \cite{MerhavFeder1998}.
Ziv and Merhav considered a different enlargement based on context-tree
classes whose number of states may grow with the horizon
\cite{ZivMerhav2007}.  Their benchmark is horizon dependent: the minimizing
context tree may change with the observed prefix length.  Our benchmark is of
a different type.  We compete with every fixed prediction rule belonging to a
broad computational class, while requiring a guarantee on every individual
binary sequence.

The most inclusive natural computational benchmark would be the class of all
total computable predictors.  This class is too broad for effective universal
aggregation.  Operationally, total computable functions do not admit a
syntactic enumeration whose members are all guaranteed to halt.  More
fundamentally, a diagonal argument shows that no computable probabilistic
master can compete with every total computable predictor on every individual
sequence.  Thus a universally learnable computable comparison class must be
restricted for a mathematical, rather than merely practical, reason.

Primitive recursion provides a natural intermediate level.  Primitive-
recursive programs are finite syntactic objects, every such program is total,
and the class admits a computable uniform evaluator.  At the same time, it is
far richer than finite-state prediction: a primitive-recursive predictor may
scan the entire observed prefix, perform bounded searches and arithmetic
computations, and use memory that grows according to an arbitrary primitive-
recursive schedule. 

The deterministic predictor classes satisfy
\[
    \mathcal P_{\FS}
    \subsetneq
    \mathcal P_{\PR}
    \subsetneq
    \mathcal P_{\comp}.
\]
For aggregation, we use the larger syntactically enumerable class
$\mathcal Q_{\mathrm{PR}}^{\mathrm{rat}}$ of rational-valued
primitive-recursive forecasters.  It contains
$\mathcal P_{\mathrm{PR}}$ as the subclass of $\{0,1\}$-valued experts.

We call a computable probabilistic master with sublinear regret relative to
every member of $\mathcal Q_{\mathrm{PR}}^{\mathrm{rat}}$ a
\emph{PR-superpredictor}.  The prefix ``PR'' describes the expert class, not
the complexity of the master.  In the represented-real terminology introduced
in Section~\ref{sec:computable}, the predictor constructed below satisfies
\[
    Q_{\PR}
    \in
    \mathcal Q_{\mathrm{comp}}
    \setminus
    \mathcal Q_{\mathrm{PR}}.
\]

Primitive recursion is not a conventional feasible complexity class, its
running times may be astronomically large, but it gives a mathematically
clean boundary between finite-memory prediction and unrestricted total
computability.

The aggregation principle is not tied to this broad class.  By restricting the
expert descriptions to a narrower syntactically enumerable family of total
predictors, one obtains corresponding limited superpredictors.  Examples
include predictors with a prescribed finite-state bound, explicitly clocked
polynomial-time or polynomial-space predictors, and predictors subject to any
fixed computable time or memory bound.  The resulting guarantee is then
relative only to the selected resource-bounded class, and the computational
cost of the master depends on the chosen activation and evaluation schedule.
Thus primitive recursion is used here to expose the maximal structural scope
of the method, while the same construction admits more feasible restricted
versions.
This intermediate benchmark has two information-theoretic consequences.
First, it contains the Prediction by Partial Matching (PPM) predictor studied
by D\k{e}bowski and Steifer \cite{DebowskiSteifer2022}.  Consequently, one
model-free aggregate is simultaneously universal in the individual-sequence
regret sense and Bayes optimal on every Martin-L\"of random realization of
every computable stationary ergodic source.  The same mechanism also yields
an exact optimality theorem for a structured nonstationary family obtained by
interleaving finitely many independent stationary ergodic components according
to an arbitrary primitive-recursive schedule.  Second, aggregation is not
merely a complicated implementation of one distinguished PR rule.  A
primitive-recursive normal signal can be unpredictable to every finite-state
predictor while remaining easy for a primitive-recursive position rule, and
every fixed primitive-recursive predictor can be separated sharply from the
PR-superpredictor on a large family of sequences.

Throughout the paper, a probabilistic predictor
$Q:2^{<\omega}\to[0,1]$ assigns a probability to the next bit being $1$.  We
write
\begin{align}
    L_N(Q,X)
      :=&\sum_{n<N}\ell\!\left(Q(X_0^{n-1}),X_n\right),
    \qquad \\
    \nonumber
    \ell(q,b):=&q(1-b)+(1-q)b,
\end{align}
for its cumulative conditional expected zero-one loss.  A deterministic
predictor is identified with a $\{0,1\}$-valued probabilistic predictor, so the
same notation $L_N(P,X)$ is its cumulative number of errors.  Average loss is
therefore written explicitly as $L_N/N$. Throughout, $\log$ denotes the natural logarithm, while $\log_2$ denotes the base-two logarithm.

The contributions of the paper are as follows.
\begin{enumerate}[leftmargin=*,label=(\roman*)]
\item We develop the computational boundary underlying individual-sequence
      prediction.  The innovation sequence
      $I_P^X(n)=X_n\oplus P(X_0^{n-1})$ is a computable causal re-encoding of
      the data; finite innovation is equivalent to computability; and an upper
      prediction-error rate $r\leq1/2$ implies the effective packing-dimension
      bound $\Dim(X)\leq H_2(r)$.  We also prove that no computable
      probabilistic predictor can compete on every individual sequence with
      all total computable deterministic predictors.
\item We introduce primitive-recursive predictors as a broad syntactically
      enumerable class of total experts and construct a computable
      PR-superpredictor.  Expert $e$ is activated at a dyadic time, receives
      polynomial prior mass $\pi_e=1/((e+1)(e+2))$, and retains its accumulated
      exponential performance weight when the learning rate changes.  The
      resulting regret against every fixed PR expert is $o(N)$, with an
      explicit finite-time bound.  Conversely, no PR-computable probabilistic
      predictor can possess this superprediction property.
\item We prove that the deterministic PPM predictor is primitive recursive.
      Therefore, for every computable stationary ergodic binary measure $\mu$
      and every $X\in\MLR_\mu$, primitive-recursive prediction, unrestricted
      computable prediction, and the PR-superpredictor all attain the
      infinite-past Bayes error $e_\mu$.
\item We extend this optimality beyond stationarity.  If finitely many
      independent computable stationary ergodic sources are interleaved by an
      arbitrary primitive-recursive schedule, then, without being given either
      the schedule or the component laws, the PR-superpredictor satisfies
      \[
          L_N(Q_{\PR},X)
          =\sum_j N_j(N)e_j+o(N).
      \]
      No limiting state frequencies are required, and no computable predictor
      can asymptotically improve on this cumulative statewise Bayes benchmark.
\item We establish strict separation results.  There are normal sequences on
      which every finite-state predictor has error $1/2$, whereas
      primitive-recursive prediction attains any prescribed rational error
      $0<\varepsilon<1/2$. Moreover, for every fixed PR predictor $P$ and every rational
$\varepsilon$ with $0<\varepsilon<1/2$, there is a computable nonatomic
full-support source on whose Martin-L\"of random realizations $P$ has error
$1-\varepsilon$, while the PR-superpredictor has error $\varepsilon$. Primitive-recursive
      separating witnesses can additionally retain carriers from arbitrarily
      high levels of a fixed primitive-recursive hierarchy.
\end{enumerate}

The finite-state separation is the predictive counterpart of the complexity
phenomenon emphasized in the individual-sequence literature: a sequence may
be algorithmically simple while appearing maximally complex to every
finite-state machine.  Here the primitive-recursive level recognizes a
programmatic signal that the finite-state level cannot exploit.  The fixed-PR
separation then shows that no single rule at the intermediate level captures
the predictive power of the entire class.

The paper is related to, but distinct from, a companion study of prediction
innovations \cite{LeshemInnovation2026}.  The companion work treats the entire
residual sequence and asks which algorithmic-complexity and Turing-degree
structures can be exposed by prediction and computable sampling.  Here we
retain only the full-schedule density of prediction errors and optimize that
density over a predictor class.  Structural innovation spectra and optimal
Hamming prediction answer complementary questions.

The remainder of the paper is organized as follows.  Section~\ref{sec:computable}
introduces computable prediction, defines the primitive-recursive benchmark,
and develops the Kolmogorov-complexity bounds.  Section~\ref{sec:stochastic-examples}
develops Bernoulli, Markov, and general stationary-ergodic benchmarks.
Section~\ref{sec:pr-aggregation} constructs the PR-superpredictor.
Section~\ref{sec:classical-comparison} positions the resulting benchmark
relative to finite-state and context-tree prediction.  Section~\ref{sec:ppm-ergodic}
proves stationary-ergodic optimality, Section~\ref{sec:pr-switching} proves
optimality for primitive-recursive nonstationary interleavings, and
Section~\ref{sec:separation} gives the simultaneous finite-state separation,
the full-measure separation from every fixed PR predictor, and the
high-hierarchy primitive-recursive witnesses.  Detailed recursion-theoretic
background and the hierarchy conventions are collected in
Appendix~\ref{app:pr-background}.

\section{Computable Prediction and Algorithmic Complexity}
\label{sec:computable}

We work with one-sided binary sequences $X=X_0X_1\cdots\in\bits^{\mathbb N}$.
For $n\geq0$, write $X_0^{n-1}$ for the length-$n$ prefix, with
$X_0^{-1}$ equal to the empty string.

\begin{definition}[Predictors and losses]
A probabilistic predictor is a total map
\[
    Q:\bits^{<\mathbb N}\to[0,1],
\]
where $Q(\sigma)$ is interpreted as the probability assigned to the next
symbol being $1$.  Its one-step conditional expected zero-one loss is
\[
    \ell(q,b):=q(1-b)+(1-q)b,
\]
and its cumulative loss on $X$ is
\[
    L_N(Q,X):=\sum_{n<N}\ell\!\left(Q(X_0^{n-1}),X_n\right).
\]
A deterministic predictor is a predictor
$P:\bits^{<\mathbb N}\to\bits$.  Identifying its output bit with a degenerate
probability, the same notation gives
\[
    L_N(P,X)
      =\sum_{n<N}\ind\!\left\{P(X_0^{n-1})\neq X_n\right\}.
\]
Thus $L_N$ always denotes cumulative loss; the corresponding average loss is
$L_N/N$.
We now extend the definition to computable probabilistic predictors.
\end{definition}
\begin{definition}[Computable probabilistic predictors]
A probabilistic predictor
$Q$
is \emph{computable} if there is a total computable rational-valued
function
\[
    \widehat Q:2^{<\omega}\times\mathbb N\to\mathbb Q
\]
such that
\[
    \left|\widehat Q(\sigma,k)-Q(\sigma)\right|
    \leq2^{-k}
\]
for every finite string $\sigma$ and every $k$.

It is \emph{PR-computable} if such an approximation
$\widehat Q(\sigma,k)$ can be chosen primitive recursive, using a fixed
primitive-recursive coding of rational numbers.  We write
$
    \mathcal Q_{\mathrm{comp}}
$
for the class of computable probabilistic predictors and
$
    \mathcal Q_{\mathrm{PR}}
$
for the class of PR-computable probabilistic predictors.
\end{definition}
\subsection{The primitive-recursive benchmark}
\label{subsec:pr-benchmark}

Fix a primitive-recursive coding of finite binary strings, with
primitive-recursive length, concatenation, and coordinate-extraction
operations.

\begin{definition}[Primitive-recursive predictor]
A deterministic predictor $P:2^{<\omega}\to\{0,1\}$ is
\emph{primitive recursive} if its action on the code of the observed finite
prefix is primitive recursive.  We write
\[
    \mathcal P_{\mathrm{PR}}
       :=\{P:P\text{ is a primitive-recursive predictor}\}
\]
and define
\[
    e_{\mathrm{PR}}(X)
       :=\inf_{P\in\mathcal P_{\mathrm{PR}}}
          \limsup_{N\to\infty}\frac{L_N(P,X)}{N}.
\]
\end{definition}
For the aggregation construction we use a slightly larger, but still
syntactically enumerable, class of rational-valued primitive-recursive
forecasters.

\begin{definition}[Rational primitive-recursive forecaster]
Let $a,b:2^{<\omega}\to\mathbb N$ be primitive-recursive functions.  Define
\[
    R_{a,b}(\sigma)
    :=
    \begin{cases}
    \displaystyle
    \frac{a(\sigma)}{a(\sigma)+b(\sigma)},
       &a(\sigma)+b(\sigma)>0,\\[3mm]
    \displaystyle\frac12,
       &a(\sigma)+b(\sigma)=0.
    \end{cases}
\]
A probabilistic predictor of this form is called a
\emph{rational primitive-recursive forecaster}.  We denote their class by
$
    \mathcal Q_{\mathrm{PR}}^{\mathrm{rat}}.
$
\end{definition}
Every deterministic primitive-recursive predictor belongs to
$\mathcal Q_{\mathrm{PR}}^{\mathrm{rat}}$.  Indeed, for
$P\in\mathcal P_{\mathrm{PR}}$, take
\[
    a(\sigma)=P(\sigma),
    \qquad
    b(\sigma)=1-P(\sigma).
\]
Thus
\[
    \mathcal P_{\mathrm{PR}}
    \subseteq
    \mathcal Q_{\mathrm{PR}}^{\mathrm{rat}}
    \subseteq
    \mathcal Q_{\mathrm{PR}}.
\]

Pairs of primitive-recursive programs admit an effective enumeration.
Consequently,
$\mathcal Q_{\mathrm{PR}}^{\mathrm{rat}}$ has an effective enumeration
\[
    P_0,P_1,\ldots,
\]
possibly with repetitions.  The exact rational value $P_e(\sigma)$ is
computable uniformly from $e$ and $\sigma$, although the uniform evaluator
need not itself be primitive recursive.

For a deterministic predictor, define the innovation sequence
\[
    I_P^X(n)=X_n\oplus P(X_0^{n-1}).
\]
Thus $I_P^X(n)=1$ exactly at the prediction errors, and
$L_N(P,X)=\sum_{n<N}I_P^X(n)$.

\begin{definition}[Computable prediction error]
Let
\[
    e_{\comp}(X)
      =\inf_{P\text{ total computable}}
        \limsup_{N\to\infty}\frac{L_N(P,X)}{N}.
\]
We also write
\[
    \Dim(X)=\limsup_{N\to\infty}
       \frac{K(X_0^{N-1})}{N}
\]
for the effective packing, or effective strong, dimension of $X$, where $K$
is prefix-free Kolmogorov complexity \cite{AthreyaHitchcockLutzMayordomo2007,
LiVitanyi2008}.
\end{definition}

The first prediction threshold is exact.

\begin{theorem}[Finite innovation equals computability]
\label{thm:finite-innovation}
For $X\in\bits^{\mathbb N}$, the following are equivalent:
\begin{enumerate}[label=(\roman*)]
\item $X$ is computable;
\item some total computable predictor makes only finitely many errors on $X$;
\item some total computable predictor has an innovation sequence of finite
      support on $X$.
\end{enumerate}
\end{theorem}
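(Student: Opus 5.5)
We prove the cycle (i)$\Rightarrow$(ii)$\Rightarrow$(iii)$\Rightarrow$(i). Only the last implication carries content; the first two are essentially definitional.

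\emph{(i)$\Rightarrow$(ii).} Suppose $X$ is computable. Define $P(\sigma):=X_{|\sigma|}$, that is, compute the bit of $X$ at position $|\sigma|$ and ignore the contents of $\sigma$. This $P$ is total computable because $X$ is. On $X$ itself, $P(X_0^{n-1})=X_n$ for every $n$, so $P$ makes no errors at all.

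\emph{(ii)$\Leftrightarrow$(iii).} This equivalence is immediate from the identity $L_N(P,X)=\sum_{n<N}I_P^X(n)$. The set of error positions is exactly $\supp I_P^X$, so $P$ makes finitely many errors on $X$ if and only if its innovation sequence on $X$ has finite support.

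\emph{(iii)$\Rightarrow$(i).} Let $P$ be total computable with $\supp I_P^X$ finite. The key observation is that the innovation is a causal, invertible re-encoding of the data: from $X_n=P(X_0^{n-1})\oplus I_P^X(n)$ we can reconstruct $X$ from $I_P^X$ by recursion on $n$. Concretely:
\begin{itemize}
\item Since the support is finite, $I_P^X$ is a finite string followed by zeros. We may therefore hard-code the finite set $F=\supp I_P^X$ into an algorithm.
\item The algorithm computes $X_n$ recursively by setting $X_n:=P(X_0^{n-1})\oplus\ind\{n\in F\}$.
\item Each step calls the total function $P$ on a previously computed prefix, so the procedure halts for every $n$.
\item By induction on $n$, the output agrees with $X$.
\end{itemize}
Hence $X$ is computable.

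The argument is non-uniform, since the finite set $F$ is built into the algorithm rather than computed from $P$; this is harmless because computability of $X$ only asks for the existence of some algorithm. The one point that genuinely needs care is totality of $P$. If $P$ were only partial computable, the recursion could diverge at some prefix of $X$, so the hypothesis ``total'' is used essentially here. Nothing else in the argument is delicate.
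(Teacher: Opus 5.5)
Your proof is correct and follows essentially the paper's argument. The only cosmetic difference is in \emph{(iii)$\Rightarrow$(i)}: you hardwire the finite error set $F$ and invert the innovation map via $X_n=P(X_0^{n-1})\oplus\ind\{n\in F\}$, whereas the paper hardwires a prefix of $X$ extending past the last error and then iterates $X_n=P(X_0^{n-1})$. Both are the same non-uniform finite-information trick.
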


\begin{proof}
If $X$ is computable, use $P(\sigma)=X_{|\sigma|}$.  Conversely, suppose that
$P$ makes no errors after a finite time.  Hardwire a prefix extending beyond
all errors.  Thereafter reconstruct $X$ recursively by setting
$X_n=P(X_0^{n-1})$.  Hence $X$ is computable.  The equivalence of the final two
conditions follows from
$\supp(I_P^X)=\{n:P(X_0^{n-1})\neq X_n\}$.
\end{proof}

The complete innovation sequence is not intrinsically simpler than the data.

\begin{proposition}[Causal innovation transform]
\label{prop:innovation-homeomorphism}
For every total computable predictor $P$, the map
\[
    \Phi_P(X)=I_P^X
\]
is a computable bijection of Cantor space with a computable causal inverse.
Consequently, $X$ and $I_P^X$ have the same Turing degree \cite{Odifreddi1989} and 
\[
    K(I_P^X\!\upharpoonright N)=K(X_0^{N-1})+O_P(1).
\]
\end{proposition}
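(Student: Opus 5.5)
The plan is to write down the inverse map explicitly and then read off every claim from the fact that both directions are causal and uniformly computable from $P$.

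\emph{Inverse and bijectivity.} Given $Y\in\bits^{\mathbb N}$, define $\Psi_P(Y)=X$ recursively by
\[
    X_n := Y_n\oplus P(X_0^{n-1}).
\]
This is well defined because $P$ is total. Computing $X_n$ requires only $Y_0^{n}$ and the earlier values $X_0^{n-1}$.

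In the forward direction, $I_P^X(n)=X_n\oplus P(X_0^{n-1})$ depends only on $X_0^{n}$. Hence both $\Phi_P$ and $\Psi_P$ are given by uniformly computable, length-preserving maps on finite strings, $\phi_P,\psi_P:\bits^{N}\to\bits^{N}$ for each $N$. An induction on $n$, using $b\oplus c\oplus c=b$, shows that $\psi_P\circ\phi_P$ and $\phi_P\circ\psi_P$ are the identity on strings of every length. Passing to the limit gives $\Psi_P\circ\Phi_P=\mathrm{id}$ and $\Phi_P\circ\Psi_P=\mathrm{id}$, so $\Phi_P$ is a bijection. Each prefix of the output is determined by the prefix of the input of the same length, so both maps are computable and causal. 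Since they are isometries for the usual Cantor metric, $\Phi_P$ is in fact a homeomorphism.

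\emph{Turing degree.} The same recursions give Turing reductions in both directions. An oracle for $X$ computes each bit $I_P^X(n)$ by evaluating the total computable $P$ on $X_0^{n-1}$. Conversely, an oracle for $I_P^X$ computes $X$ bit by bit using $\Psi_P$. Thus $X\equiv_T I_P^X$.

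\emph{Complexity.} The finite maps $\phi_P$ and $\psi_P$ are computable functions of strings, and each is determined by a fixed program depending only on $P$. By the standard invariance property of prefix-free complexity under a computable map, $K(f(\sigma))\le K(\sigma)+O_f(1)$, we get
\[
    K(\phi_P(X_0^{N-1}))\le K(X_0^{N-1})+O_P(1),
    \qquad
    K(X_0^{N-1})=K(\psi_P(\phi_P(X_0^{N-1})))\le K(\phi_P(X_0^{N-1}))+O_P(1).
\]
Since $I_P^X\!\upharpoonright N=\phi_P(X_0^{N-1})$, these two inequalities give the stated equality up to $O_P(1)$.

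The main point needing care is that the bound is uniform in $N$. This requires the string-level maps to be length-preserving and computed by a single program for all lengths, so that no information about $N$ has to be encoded. Causality guarantees exactly this: the machine reads a string of length $N$ and outputs a string of the same length without being told $N$ separately.
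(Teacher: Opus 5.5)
Your proposal is correct and follows essentially the same route as the paper. The explicit causal inverse $X_n=I_P^X(n)\oplus P(X_0^{n-1})$ gives a length-preserving computable bijection on strings of each length, and the Turing equivalence and the $O_P(1)$ complexity bound follow by translating descriptions through a fixed program in each direction; you add more detail than the paper's brief argument (the inductive check that the string maps are mutually inverse, and the isometry remark giving a homeomorphism).
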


\begin{proof}
The forward map is computable.  Given $I_P^X$ and a reconstructed prefix
$X_0^{n-1}$, recover
\[
    X_n=I_P^X(n)\oplus P(X_0^{n-1}).
\]
At each length this gives a computable bijection between binary strings, so
prefix-free descriptions translate in both directions with only a constant
program overhead.
\end{proof}

Low innovation density nevertheless compresses the innovation support and
therefore the original sequence.

\begin{proposition}[Error-rate coding bound]
\label{prop:error-coding}
Let $P$ be a total computable predictor.  If
\[
    \limsup_{N\to\infty}\frac{L_N(P,X)}{N}\leq r,
    \qquad 0\leq r\leq\frac12,
\]
then
\[
    \Dim(X)\leq H_2(r),
\]
where
\[
    H_2(t)=-t\log_2t-(1-t)\log_2(1-t).
\]
In particular,
\[
    \Dim(X)\leq H_2(e_{\comp}(X))
\]
whenever $e_{\comp}(X)\leq1/2$.
\end{proposition}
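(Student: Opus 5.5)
The plan is to combine the causal innovation transform of Proposition~\ref{prop:innovation-homeomorphism} with a standard enumerative (combinatorial) code for low-weight binary strings. By that proposition,
\[
K(X_0^{N-1})=K(I_P^X\!\upharpoonright N)+O_P(1),
\]
so it suffices to bound the prefix complexity of the length-$N$ innovation prefix. That prefix is a binary string of length $N$ with exactly $k_N:=L_N(P,X)$ ones. We describe it by three pieces: first $N$, then $k_N$, each in self-delimiting form at cost $O(\log N)$; then the index of the support set among all $\binom{N}{k_N}$ subsets of size $k_N$. The index costs $\lceil\log_2\binom{N}{k_N}\rceil$ bits, with no delimiter needed because its length is computable from $N$ and $k_N$. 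This gives
\[
K(X_0^{N-1})\le \log_2\binom{N}{k_N}+O(\log N).
\]

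Next, use the entropy bound $\log_2\binom{N}{k}\le N H_2(k/N)$ together with the monotonicity of $H_2$ on $[0,1/2]$. Fix $\delta>0$ with $r+\delta\le 1/2$, or take $r+\delta$ slightly above $1/2$ in the boundary case. For all large $N$ the hypothesis gives $k_N/N\le r+\delta$. Then:
\begin{itemize}
\item if $r+\delta\le1/2$, we get $H_2(k_N/N)\le H_2(r+\delta)$;
\item if $r=1/2$, the claim $\Dim(X)\le1=H_2(1/2)$ is trivial, since $K(X_0^{N-1})\le N+O(\log N)$.
\end{itemize}
Dividing by $N$ and taking $\limsup$ yields $\Dim(X)\le H_2(r+\delta)$. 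Letting $\delta\downarrow0$ and using continuity of $H_2$ gives $\Dim(X)\le H_2(r)$.

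One subtlety is that $k_N/N$ could exceed $1/2$ at some $N$ even when $r<1/2$. This is harmless, because the bound only uses eventual behaviour and $\delta$ can be chosen so that $r+\delta<1/2$.

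For the final clause, let $r_0=e_{\comp}(X)\le1/2$. For each $\delta>0$ there is a total computable $P$ with $\limsup L_N(P,X)/N\le r_0+\delta$. If $r_0+\delta\le 1/2$, the first part gives $\Dim(X)\le H_2(r_0+\delta)$. If $r_0=1/2$, the claim is again trivial. Letting $\delta\to0$ finishes the argument.

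The only step requiring any care is the constant overhead. The $O_P(1)$ term from Proposition~\ref{prop:innovation-homeomorphism} depends on $P$, which is fixed during the first part, so it vanishes after division by $N$. In the final clause the predictor changes with $\delta$, but the limit in $N$ is taken before $\delta\to0$, so this causes no trouble.
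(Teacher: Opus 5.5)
Your proof is correct and follows essentially the same route as the paper: reconstruct $X_0^{N-1}$ from $P$ and the error set, code the error set by its combinatorial rank at a cost of $NH_2(r+\delta)+o(N)$, let $\delta\downarrow0$, and for the final clause choose near-optimal computable predictors. The differences are minor. You encode the exact error count $k_N$ rather than ranking among subsets of size at most $(r+\delta)N$, so the decoder never needs a computable threshold. You also make explicit the requirement $r+\delta\leq1/2$ and the trivial case $r=1/2$, both of which the paper leaves implicit.
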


\begin{proof}
Given $P$, $N$, and the error set
$E_N=\{n<N:I_P^X(n)=1\}$, one reconstructs $X_0^{N-1}$ recursively.  If
$|E_N|\leq(r+\delta)N$, then $E_N$ can be specified by its rank among all
subsets of $\{0,\ldots,N-1\}$ of size at most $(r+\delta)N$.  Standard
binomial estimates give a description of length
\[
    NH_2(r+\delta)+o(N).
\]
Taking the limsup and then $\delta\downarrow0$ proves the first assertion.
For the second, choose for every $\delta>0$ a computable predictor whose upper
error rate is at most $e_{\comp}(X)+\delta$ and use continuity of $H_2$.
\end{proof}

Proposition~\ref{prop:innovation-homeomorphism} and
Proposition~\ref{prop:error-coding} describe two complementary facts.  The
complete residual is an invertible re-encoding, but a sparse residual can be
encoded economically by its support.  The companion innovation-spectrum paper
studies the residual itself and its computably selected traces
\cite{LeshemInnovation2026}; the remainder of this paper studies optimal error
density.

\begin{proposition}[No uniformly best computable predictor]
\label{prop:no-universal-computable-master}
For every computable probabilistic predictor
\[
    Q:\bits^{<\mathbb N}\to[0,1],
\]
there exist a computable sequence $X$ and a total computable deterministic
predictor $P_X$ such that
\[
    L_N(P_X,X)=0
    \qquad\text{for all }N,
\]
whereas
\[
    L_N(Q,X)\geq \frac{N}{3}
    \qquad\text{for all }N.
\]
Consequently, there is no computable probabilistic predictor that is
uniformly asymptotically competitive with every total computable
deterministic predictor on every individual sequence.
\end{proposition}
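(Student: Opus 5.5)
The proof will be a diagonal construction against $Q$. Since $Q$ is computable, we have the approximation $\widehat Q(\sigma,k)$ with $|\widehat Q(\sigma,k)-Q(\sigma)|\leq 2^{-k}$. Define $X$ recursively: given the prefix $\sigma=X_0^{n-1}$, compute $q_n:=\widehat Q(\sigma,3)$, a rational within $1/8$ of $Q(\sigma)$. Set
\[
    X_n:=\begin{cases}0,& q_n\geq 1/2,\\ 1,& q_n<1/2.\end{cases}
\]
Each step requires only one total computable evaluation of $\widehat Q$ followed by an exact rational comparison. We never compare the real number $Q(\sigma)$ with $1/2$ directly, because that comparison is undecidable. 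Hence $X$ is a computable sequence.

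Next we bound the one-step loss. If $q_n\geq1/2$, then $X_n=0$ and the loss is $\ell(Q(\sigma),0)=Q(\sigma)\geq q_n-1/8\geq 3/8$. If $q_n<1/2$, then $X_n=1$ and the loss is $1-Q(\sigma)\geq 1-q_n-1/8>3/8$. Thus every step costs at least $3/8\geq1/3$, and summing gives $L_N(Q,X)\geq 3N/8\geq N/3$ for all $N$. The bound $1/3$ only needs an approximation precision of $1/6$, so choosing $k=3$ leaves slack.

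The competing predictor is simply $P_X(\sigma):=X_{|\sigma|}$, which is total computable because $X$ is computable (Theorem~\ref{thm:finite-innovation}, direction (i)$\Rightarrow$(ii)). Concretely, $P_X$ recomputes the first $|\sigma|+1$ bits of $X$ from scratch and ignores $\sigma$. On $X$ it makes no errors, so $L_N(P_X,X)=0$. For the consequence, suppose $Q$ were competitive with all total computable predictors, meaning $\limsup_N (L_N(Q,X)-L_N(P,X))/N\leq0$ for every $P$ and $X$. The pair $(X,P_X)$ constructed above gives regret at least $N/3$, which contradicts this.

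The only real obstacle is the undecidability of threshold comparison for computable reals. Replacing the exact test with a rational approximation at fixed precision, while keeping a margin of $1/8$ below $1/2$, resolves it. Everything else is routine bookkeeping.
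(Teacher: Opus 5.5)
Your proof is correct and is essentially the paper's argument: the same diagonal construction, which sets $X_n$ against a fixed-precision rational approximation of $Q(X_0^{n-1})$, with $P_X(\sigma)=X_{|\sigma|}$ as the perfect competitor. The only difference is cosmetic. You use precision $2^{-3}$, giving per-round loss at least $3/8$, whereas the paper uses precision strictly below $1/6$, giving per-round loss above $1/3$.
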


\begin{proof}
Construct $X$ recursively.  Having computed $X_0^{n-1}$, compute a rational
number $\widetilde q_n$ satisfying
\[
    \bigl|\widetilde q_n-Q(X_0^{n-1})\bigr|<\frac16.
\]
Set
\[
    X_n=
    \begin{cases}
       0,&\widetilde q_n\geq1/2,\\
       1,&\widetilde q_n<1/2.
    \end{cases}
\]
If $X_n=0$, then $Q(X_0^{n-1})>1/3$; if $X_n=1$, then
$1-Q(X_0^{n-1})>1/3$.  Thus the conditional expected loss of $Q$ is at
least $1/3$ at every round.  The recursive construction makes $X$
computable.  Therefore
\[
    P_X(\sigma):=X_{|\sigma|}
\]
is a total computable predictor and is correct on $X$ at every round.
\end{proof}

Proposition~\ref{prop:no-universal-computable-master} complements the
non-enumerability of total computable programs.  The problem is not merely
that the natural expert pool is difficult to list: unrestricted computable
prediction admits no computable universally competitive master.  Primitive
recursion provides a large total class for which effective aggregation is
possible.

\section{Benchmarks from Simple Stochastic Sources}
\label{sec:stochastic-examples}

The individual-sequence benchmark becomes concrete on sample paths of
computable stochastic sources.  We use Martin-L\"of randomness to state
effective pointwise versions of the particular strong-law, ergodic, and
martingale results invoked below.

\subsection{Computable measures and Martin-L\"of randomness}
\label{subsec:mlr}

For a finite binary string $\sigma$, let
\[
    [\sigma]:=\{X\in2^\omega:\sigma\prec X\}
\]
denote the corresponding cylinder set.  A probability measure $\mu$ on
$2^\omega$ is \emph{computable} if, uniformly in $\sigma$ and $m$, one can
compute a rational approximation to $\mu([\sigma])$ with error at most
$2^{-m}$.  A sequence of open sets $(U_k)_{k\geq1}$ is \emph{uniformly
effectively open} if there is a computably enumerable set
$W\subseteq\mathbb N\times2^{<\omega}$ such that
\[
    U_k=\bigcup_{(k,\sigma)\in W}[\sigma].
\]

\begin{definition}[Martin-L\"of randomness]
Let $\mu$ be a computable probability measure on $2^\omega$.  A
\emph{$\mu$-Martin-L\"of test} is a uniformly effectively open sequence
$(U_k)_{k\geq1}$ satisfying $\mu(U_k)\leq2^{-k}$ for every $k$.  A sequence
$X\in2^\omega$ is \emph{Martin-L\"of random with respect to $\mu$} if
\[
    X\notin\bigcap_{k\geq1}U_k
\]
for every $\mu$-Martin-L\"of test.  The class of such sequences is denoted by
$\MLR_\mu$.  Relative to an oracle $A$, the cylinder enumeration is allowed to
be computably enumerable in $A$; the corresponding class is denoted by
$\MLR_\mu^A$.
\end{definition}

For stationary sources, the infinite-past conditional probability used below
is evaluated in the natural two-sided stationary extension.  We invoke only
specific effective results: The effective strong law, effective Birkhoff and Breiman ergodic theorems, effective martingale bounds, randomness conservation,
and the relative product-randomness theorem with the hypotheses stated at
the points where they are used
\cite{BienvenuDayHoyrupMezhirovShen2012,Nies2009,DebowskiSteifer2022}.

\subsection{Bernoulli sources}

Let $\mu_p$ be the Bernoulli measure with computable parameter $p\in[0,1]$.
The one-step Bayes error is $\min(p,1-p)$.  This value is attainable without
knowing $p$.

\begin{proposition}[Unknown Bernoulli parameter]
\label{prop:bernoulli}
Define the empirical-majority predictor
\begin{equation}
    P_{\mathrm B}(\sigma)
=
\mathbf{1}\!\left\{\|\sigma\|_1>\frac{|\sigma|}{2}\right\}.
\end{equation}
Then $P_{\mathrm B}$ is primitive recursive, and for every computable $p$ and
every $X\in\MLR_{\mu_p}$,
\[
    \lim_{N\to\infty}\frac{L_N(P_{\mathrm B},X)}{N}
      =\min(p,1-p).
\]
Moreover, every computable probabilistic predictor $Q$ satisfies
\[
    \liminf_{N\to\infty}\frac{L_N(Q,X)}{N}
      \geq\min(p,1-p).
\]
Hence $e_{\comp}(X)=\min(p,1-p)$.
\end{proposition}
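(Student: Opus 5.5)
The proof has three parts: showing $P_{\mathrm B}$ is primitive recursive, establishing its error rate via the effective strong law, and proving the lower bound via an effective martingale law.

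\emph{Primitive recursiveness.} The maps $\sigma\mapsto\|\sigma\|_1$ (a bounded sum of coordinate extractions) and $\sigma\mapsto|\sigma|$ are primitive recursive under the fixed coding. The comparison $2\|\sigma\|_1>|\sigma|$ is a primitive-recursive predicate, so its indicator is primitive recursive.

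\emph{Upper bound for $P_{\mathrm B}$.} Let $X\in\MLR_{\mu_p}$. By the effective strong law of large numbers for computable Bernoulli measures, $\frac1n\sum_{k<n}X_k\to p$.

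\textbf{Case $p>1/2$.} The empirical frequency eventually exceeds $1/2$, so $P_{\mathrm B}(X_0^{n-1})=1$ for all large $n$. The loss from some point on is $\ind\{X_n=0\}$, whose average tends to $1-p$, again by the strong law.

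\textbf{Case $p<1/2$.} This is symmetric: the predictor eventually outputs $0$, and the average loss tends to $p$.

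\textbf{Case $p=1/2$.} The predictor's output is some deterministic function $P(X_0^{n-1})$ of the past. The error indicator $\ind\{P(X_0^{n-1})\neq X_n\}$ has conditional mean exactly $1/2$ under $\mu_{1/2}$. Its centered partial sums form a bounded-increment martingale adapted to a computable filtration. The effective martingale strong law (a computable Azuma bound turned into a Martin-L\"of test via a Borel--Cantelli sum) gives average loss $\to1/2$ on every ML-random $X$. The same argument also handles the endpoints $p\in\{0,1\}$ trivially, since $\MLR_{\mu_p}$ is then the single constant sequence.

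\emph{Lower bound for computable $Q$.} Write the loss as
\[
\ell(Q(X_0^{n-1}),X_n)=\mathbb E_{\mu_p}[\ell\mid X_0^{n-1}]+D_n ,
\]
where the conditional expectation equals $q_n(1-p)+(1-q_n)p\geq\min(p,1-p)$ and $D_n$ is a martingale difference with $|D_n|\leq1$. It therefore suffices to show $\frac1N\sum_{n<N}D_n\to0$ for all $X\in\MLR_{\mu_p}$. Fix a rational $\varepsilon>0$. Azuma's inequality gives
\[
\mu_p\Bigl(\Bigl|\sum_{n<N}D_n\Bigr|>\varepsilon N\Bigr)\leq2e^{-\varepsilon^2N/2} .
\]
The events $\bigl\{\,|\sum_{n<N}D_n|>\varepsilon N\,\bigr\}$ are effectively open uniformly in $N$, because $Q$ and $p$ are computable and the strict inequality is semi-decidable from rational approximations. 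Their tails $\bigcup_{N\geq M}$ have computably summable measure, so they form a Solovay/Martin-L\"of test. Hence an ML-random $X$ lies in only finitely many of them, and taking $\varepsilon=1/j$ for all $j$ gives $\liminf L_N(Q,X)/N\geq\min(p,1-p)$.

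Deterministic total computable predictors are special cases of $Q$. Combining the two bounds, $P_{\mathrm B}$ attains the rate and nothing computable beats it, so $e_{\comp}(X)=\min(p,1-p)$.

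\emph{Main obstacle.} The delicate step is making the martingale argument effective. The boundary case is the strict inequality on a real-valued $Q$: the event must be open (use strict $>$ together with rational approximations from $\widehat Q$), and the measure bound must remain valid. The fallback is to replace $Q$ by a rational approximation $\widehat Q(\cdot,n)$ whose loss differs by at most $2^{-n}$, which has summable effect. Taking $\bigcup_{N\geq M}$ with the explicit exponential tail makes the test's measure computably bounded by $2^{-k}$.
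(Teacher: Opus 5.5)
Your proposal is correct and follows essentially the same route as the paper. You use the effective strong law to make the empirical majority eventually agree with the Bayes action when $p\neq1/2$, the exact conditional error $1/2$ when $p=1/2$, and a bounded martingale-difference decomposition with $q(1-p)+(1-q)p\geq\min(p,1-p)$ for the lower bound; your Azuma-based Solovay/Martin-L\"of test is a valid way to make the martingale step effective, comparable to the Chebyshev-on-the-square-subsequence argument with effective Borel--Cantelli that the paper uses in Appendix~\ref{app:effective-bayes-lower-bound}.
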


\begin{proof}
If $p\neq1/2$, the effective strong law implies that the empirical majority is
eventually the true majority, so the predictor differs only finitely often
from the constant Bayes action.  If $p=1/2$, every past-dependent action has
conditional error $1/2$.  In all cases, subtracting the conditional error from
the realized or soft loss gives a bounded computable martingale difference,
whose average vanishes on every $\mu_p$-Martin-L\"of random path.  The lower
bound follows because no randomized action has conditional error below
$\min(p,1-p)$.
\end{proof}

For Bernoulli random points the complexity bound is tight:
\[
    \Dim(X)=H_2(p)=H_2(\min(p,1-p)).
\]

\subsection{Binary first-order Markov sources}

Consider the stationary irreducible binary Markov chain
\[
    Q=\begin{pmatrix}1-a&a\\ b&1-b\end{pmatrix},
    \qquad a,b>0,
\]
with computable transition probabilities.  Its stationary distribution is
\[
    \pi_0=\frac{b}{a+b},\qquad
    \pi_1=\frac{a}{a+b},
\]
and its Bayes error is
\[
    e_Q=\pi_0\min(a,1-a)+\pi_1\min(b,1-b).
\]

For $\sigma=\sigma_0\cdots\sigma_{n-1}\in 2^{<\omega}$, define
\[
N_{ij}(\sigma)
:=
\sum_{t=0}^{n-2}
\mathbf{1}\{\sigma_t=i,\ \sigma_{t+1}=j\},
\qquad i,j\in\{0,1\}.
\]
For a nonempty prefix $\sigma$ ending in state $i=\sigma_{n-1}$, let
\[
\widehat q_i(1\mid\sigma)
:=
\frac{N_{i1}(\sigma)+1}
     {N_{i0}(\sigma)+N_{i1}(\sigma)+2}.
\]
Define the empirical Markov predictor
\[
P_{\mathrm M}(\sigma)
:=
\begin{cases}
0, & \sigma=\varnothing,\\[1mm]
\mathbf{1}\!\left\{
\widehat q_{\sigma_{|\sigma|-1}}(1\mid\sigma)>\frac12
\right\},
& \sigma\neq\varnothing.
\end{cases}
\]
\begin{proposition}[Unknown first-order Markov source]
\label{prop:markov}
The predictor $P_{\mathrm M}$ is primitive recursive.  For every computable
stationary irreducible binary Markov measure $\mu_Q$ and every
$X\in\MLR_{\mu_Q}$,
\[
    \lim_{N\to\infty}\frac{L_N(P_{\mathrm M},X)}{N}=e_Q.
\]
Every computable probabilistic predictor has lower asymptotic loss at least
$e_Q$, and therefore $e_{\comp}(X)=e_Q$.
\end{proposition}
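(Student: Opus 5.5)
The plan follows the Bernoulli argument of Proposition~\ref{prop:bernoulli}, with the empirical mean replaced by the transition counts $N_{ij}$. There are three parts: primitive recursiveness of $P_{\mathrm M}$, pointwise convergence of its loss to $e_Q$ on $\MLR_{\mu_Q}$, and a matching lower bound for every computable probabilistic predictor.

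\emph{Primitive recursiveness.} Each count $N_{ij}(\sigma)$ is a bounded sum over $t\le|\sigma|-2$ of a primitive-recursive indicator built from coordinate extraction, so it is primitive recursive. The comparison $\widehat q_i(1\mid\sigma)>1/2$ needs no division: it is equivalent to the integer inequality $2(N_{i1}+1)>N_{i0}+N_{i1}+2$, that is, $N_{i1}>N_{i0}$. $P_{\mathrm M}$ is then defined by cases on whether $\sigma$ is empty and on its last symbol. Both the case split and the integer comparison are primitive recursive.

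\emph{Upper bound.} Let $X\in\MLR_{\mu_Q}$. The chain is a computable stationary ergodic measure (it is irreducible because $a,b>0$), so the effective Birkhoff theorem applies to the indicators of the cylinders $[i]$ and $[ij]$. This gives $N_{ij}(X_0^{n-1})/n\to\pi_iQ_{ij}$, and since $\pi_i>0$, $N_{i1}/(N_{i0}+N_{i1})\to Q_{i1}$.

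For each state $i$ with $Q_{i1}\neq1/2$, the empirical majority in state $i$ therefore eventually coincides with the Bayes action $\ind\{Q_{i1}>1/2\}$. Consequently $P_{\mathrm M}$ differs from the Bayes predictor $P^\ast(\sigma)=\ind\{Q_{\sigma_{|\sigma|-1}1}>1/2\}$ only finitely often.

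For states with $Q_{i1}=1/2$, every action has conditional error $1/2$, so in those states the choice made by $P_{\mathrm M}$ is irrelevant to the conditional error.

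Next write
\[
    D_n=\ind\{P_{\mathrm M}(X_0^{n-1})\neq X_n\}-c_n,
\]
where $c_n$ is the conditional error of the action taken, given the current state. The $D_n$ are bounded martingale differences. The effective martingale bound (an effective strong law for bounded computable martingale differences, valid on $\MLR_{\mu_Q}$) gives $\frac1N\sum_{n<N}D_n\to0$. Finally, $c_n=\min(Q_{i1},Q_{i0})$ for all but finitely many $n$ with state $i$. Applying effective Birkhoff again to the state frequencies gives $\frac1N\sum_{n<N} c_n\to\pi_0\min(a,1-a)+\pi_1\min(b,1-b)=e_Q$.

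\emph{Lower bound.} For any computable $Q'$, write
\[
    \ell\bigl(Q'(X_0^{n-1}),X_n\bigr)-\mathbb E\bigl[\ell\bigl(Q'(X_0^{n-1}),X_n\bigr)\,\big|\,X_0^{n-1}\bigr].
\]
This is a bounded martingale difference adapted to a computable filtration, so again its averages vanish on $\mathrm{MLR}$ points. By the Markov property, the conditional expectation equals $q(1-Q_{i1})+(1-q)Q_{i1}\ge\min(Q_{i1},1-Q_{i1})$, where $i$ is the current state and $q=Q'(X_0^{n-1})$. Averaging and applying Birkhoff to the state frequencies gives $\liminf L_N(Q',X)/N\ge e_Q$.

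A deterministic computable predictor is a special case, so $e_{\comp}(X)\ge e_Q$. Since $P_{\mathrm M}$ is computable and attains $e_Q$, we get $e_{\comp}(X)=e_Q$.

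\emph{Main obstacle.} I expect the delicate point to be justifying the effective martingale strong law on every $\mu_Q$-ML-random path when the predictor $Q'$ is only computable via approximations. I would handle this by working with the rational approximations $\widehat{Q'}(\sigma,k)$ to accuracy $2^{-n}$ at step $n$. The resulting martingale is then uniformly computable, and the approximation error contributes a summable amount that does not affect the averages. With this in place I would apply the cited effective Azuma/Hoeffding-type test, whose deviation sets form a $\mu_Q$-Martin-L\"of test.
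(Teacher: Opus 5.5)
Your proposal is correct and follows the same route as the paper's proof. Both arguments use bounded-search transition counts for primitive recursiveness, the effective ergodic theorem to make the plug-in decision eventually Bayes at non-tie states (ties being irrelevant), and a bounded martingale-difference argument combined with the ergodic state frequencies, both for attaining $e_Q$ and for the lower bound on every computable predictor. Your write-up adds detail the paper's sketch omits, e.g.\ the reduction of the threshold test to $N_{i1}>N_{i0}$ and the handling of approximate forecasts in the effective martingale test.
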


\begin{proof}
All transition counts are obtained by bounded search through the input prefix,
so $P_{\mathrm M}$ is primitive recursive.  The effective ergodic theorem
gives the stationary state frequencies and the conditional transition
frequencies on every $\mu_Q$-Martin-L\"of random path.  Hence, whenever the
transition probability from a state differs from $1/2$, the plug-in decision is
eventually the Bayes decision at that state.  At a tie, either action has the
same conditional error.  Summing over the two state frequencies yields $e_Q$.
For an arbitrary computable probabilistic predictor, its conditional loss at
time $n$ is at least the Bayes loss determined by the current state.  Effective
martingale convergence and the ergodic theorem convert this conditional lower
bound into the displayed pathwise lower bound.
\end{proof}

The Markov entropy rate is
\[
    h_Q=\pi_0H_2(a)+\pi_1H_2(b),
\]
which need not determine $e_Q$.  This distinction persists for general
stationary ergodic sources.

\subsection{The stationary-ergodic Bayes benchmark}
\label{subsec:stationary_bayes}
Let $\mu$ be a stationary ergodic binary measure on $2^\omega$, and let
$\bar\mu$ denote its natural stationary extension to $2^\mathbb Z$.
This extension is unique.  Moreover, computability and ergodicity of $\mu$
pass to $\bar\mu$, since the probability of every finite two-sided block is
obtained by shifting that block into the nonnegative coordinates.

Define
\[
    p_\infty(x)
    :=
    \bar\mu\!\left(
        X_0=1
        \,\middle|\,
        X_{-\infty}^{-1}=x_{-\infty}^{-1}
    \right)
\]
for a version of the infinite-past conditional probability, and put
\[
    e_\mu
    :=
    \int
        \min\{p_\infty(x),1-p_\infty(x)\}
        \,d\bar\mu(x).
\]
We retain the notation $e_\mu$, although its definition uses the natural
extension $\bar\mu$.

The negative-time coordinates are not available to the predictor.  They are
introduced only to express the stationary limiting Bayes benchmark.  At time
$n$, every predictor considered in this paper observes only the finite
one-sided past $X_0^{n-1}$.  The effective L\'evy and Breiman theorems imply
that the Ces\`aro average of the corresponding finite-past Bayes errors
converges to $e_\mu$.  Thus passing to the two-sided natural extension does
not alter the one-sided online prediction problem or its asymptotic loss.

The
entropy rate is
\[
    h_\mu
    =
    \int H_2(p_\infty(x))\,d\bar\mu(x).
\]
Since $H_2$ is increasing and concave on $[0,1/2]$ and
$H_2(t)\geq2t$ on this interval,
\[
    H_2^{-1}(h_\mu)\leq e_\mu\leq\frac{h_\mu}{2}.
\]
The Bernoulli family attains the lower bound.  The upper bound can be
approached by sources that are nearly deterministic most of the time but
occasionally generate an almost unbiased symbol.  Thus the entropy rate and
the optimal Hamming prediction error quantify different aspects of
predictability.

The examples above use source-specific empirical estimators.
Dębowski and Steifer proved that the predictor induced by their PPM
measure attains $e_\mu$ on every Martin-Löf random realization of every
computable stationary ergodic source~\cite{DebowskiSteifer2022}.
Section~\ref{sec:ppm-ergodic} shows that this PPM predictor is primitive
recursive and therefore that the PR-superpredictor inherits its universal
Bayes-optimality.

\section{Universal Aggregation over Primitive-Recursive Experts}
\label{sec:pr-aggregation}

We now aggregate the primitive-recursive benchmark introduced in
Section~\ref{subsec:pr-benchmark}.  Prediction rules may use unbounded but
algorithmically structured memory, while each expert is required to be
primitive recursive.  The aggregation argument is a countable-expert version
of exponential weighting with delayed activation.  Prediction with expert advice is treated
systematically in \cite{CesaBianchiLugosi2006,Vovk1998}, and countable expert
classes are treated explicitly in
\cite{ChernovKalnishkanZhdanovVovk2008}.  We give a direct construction because
effective enumeration, deferred activation, and preservation of earlier
performance weights are central here.

\subsection{PR-superprediction and the internal diagonal obstruction}

\begin{definition}[PR-superpredictor]
A total computable probabilistic predictor
\[
    Q:2^{<\omega}\to[0,1]
\]
is called a \emph{primitive-recursive superpredictor}
(\emph{PR-superpredictor}) if, for every
$R\in\mathcal Q_{\mathrm{PR}}^{\mathrm{rat}}$ and every $X\in2^\omega$,
\[
    \limsup_{N\to\infty}
    \frac{L_N(Q,X)-L_N(R,X)}{N}\leq0.
\]
Since
$\mathcal P_{\mathrm{PR}}
 \subseteq\mathcal Q_{\mathrm{PR}}^{\mathrm{rat}}$,
a PR-superpredictor is, in particular, asymptotically competitive with every
deterministic primitive-recursive predictor.
\end{definition}

\begin{lemma}[A PR-superpredictor cannot be primitive recursive]
\label{lem:superpredictor-not-pr}
No PR-computable probabilistic predictor is a PR-superpredictor.  In
particular, any predictor satisfying Theorem~\ref{thm:pr-aggregation} lies
strictly outside the primitive-recursive class, although it may be total
computable.
\end{lemma}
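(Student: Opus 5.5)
We argue by an internal diagonalization modelled on Proposition~\ref{prop:no-universal-computable-master}, now carried out inside the primitive-recursive class. Suppose $Q$ is PR-computable, witnessed by a primitive-recursive approximation $\widehat Q(\sigma,k)$ with $|\widehat Q(\sigma,k)-Q(\sigma)|\leq2^{-k}$. We build a sequence $X$ against which $Q$ loses a constant fraction of every round, and a deterministic primitive-recursive predictor that is always correct on $X$. This contradicts the PR-superpredictor inequality, because $\mathcal P_{\mathrm{PR}}\subseteq\mathcal Q_{\mathrm{PR}}^{\mathrm{rat}}$.

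\emph{Step 1: the diagonal sequence.} Define $X$ recursively. Given $X_0^{n-1}$, set $\widetilde q_n:=\widehat Q(X_0^{n-1},3)$, so that $|\widetilde q_n-Q(X_0^{n-1})|\leq1/8<1/6$. Put $X_n=0$ if $\widetilde q_n\geq1/2$ and $X_n=1$ otherwise. The argument of Proposition~\ref{prop:no-universal-computable-master} then shows that the one-step loss $\ell(Q(X_0^{n-1}),X_n)$ exceeds $1/3$ (indeed it is at least $3/8$). Hence $L_N(Q,X)\geq N/3$ for every $N$.

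\emph{Step 2: primitive recursiveness of the witness.} Let $P_X(\sigma):=X_{|\sigma|}$. We must check that $n\mapsto X_0^{n-1}$, or at least its code, is primitive recursive. The map $n\mapsto \mathrm{code}(X_0^{n})$ is defined by primitive recursion (course-of-values recursion on $n$). The step function sends $\mathrm{code}(X_0^{n-1})$ to its concatenation with the bit $\ind\{\widehat Q(X_0^{n-1},3)<1/2\}$. This step uses only the primitive-recursive $\widehat Q$, a primitive-recursive comparison of coded rationals with $1/2$, and the fixed primitive-recursive concatenation on string codes. Therefore $n\mapsto X_n$ is primitive recursive. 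Composing with the primitive-recursive length function shows that $P_X\in\mathcal P_{\mathrm{PR}}$, and $L_N(P_X,X)=0$ for all $N$.

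\emph{Step 3: conclusion.} Combining the two steps gives
\[
\frac{L_N(Q,X)-L_N(P_X,X)}{N}\geq\frac13 \quad\text{for all } N,
\]
so the $\limsup$ is positive and $Q$ is not a PR-superpredictor. For the ``in particular'' clause, the predictor of Theorem~\ref{thm:pr-aggregation} is a PR-superpredictor and therefore cannot be PR-computable, even though its construction makes it total computable.

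The main obstacle is Step~2. The bookkeeping has to be done carefully: the recursion must be carried out on codes of prefixes, and the precision index must be a constant ($k=3$) rather than depending on $n$ in a non-primitive-recursive way. All auxiliary operations (rational coding, comparison, concatenation) must be the fixed primitive-recursive ones assumed in the definitions. Once the recursion is written as a single primitive-recursive step function iterated by the recursion scheme, closure of the primitive-recursive functions under primitive recursion and composition finishes the argument.
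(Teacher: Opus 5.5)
Your proposal is correct and takes essentially the same route as the paper. Both build the diagonal sequence by primitive recursion on prefix codes from a fixed-precision primitive-recursive approximation $\widehat Q(\cdot,k)$, and both use the zero-loss witness $P_X(\sigma)=X_{|\sigma|}$. The only difference is cosmetic: you take precision $k=3$, giving per-round loss at least $3/8$, where the paper takes $k=2$ and gets per-round loss at least $1/4$.
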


\begin{proof}
Let $Q$ be PR-computable, and construct a binary sequence $X$ recursively.
After $X_0^{n-1}$ has been defined, compute
\[
    \widehat q_n:=\widehat Q(X_0^{n-1},2),
\]
so that
$|\widehat q_n-Q(X_0^{n-1})|\leq1/4$, and set
\[
    X_n=
    \begin{cases}
       0,&\widehat q_n\geq1/2,\\
       1,&\widehat q_n<1/2.
    \end{cases}
\]
More explicitly, starting from the code of the empty string, the code of
each successive prefix is obtained by primitive recursion using the displayed
rule and the primitive-recursive concatenation operation.  Hence the resulting
sequence $n\mapsto X_n$ is primitive recursive.  If $X_n=0$, then
$Q(X_0^{n-1})\geq1/4$; if $X_n=1$, then
$1-Q(X_0^{n-1})>1/4$.  Hence
\[
    L_N(Q,X)\geq\frac{N}{4}
    \qquad\text{for every }N.
\]
On the other hand,
\[
    P_X(\sigma):=X_{|\sigma|}
\]
is a primitive-recursive predictor and satisfies $L_N(P_X,X)=0$ for all
$N$.  Therefore $Q$ has linear regret relative to the PR expert $P_X$ and
cannot be a PR-superpredictor.
\end{proof}

\subsection{A computable PR-superpredictor for the countable class}
Fix the effective enumeration
\[
    P_0,P_1,\ldots
\]
of $\mathcal Q_{\mathrm{PR}}^{\mathrm{rat}}$ introduced in
Section~\ref{subsec:pr-benchmark}.

The primitive-recursive experts enter the aggregation scheme gradually.
For
\[
    \tau_r:=2^r-1,
\]
define the dyadic epoch
\[
    I_r
    :=
    \{\tau_r,\tau_r+1,\ldots,\tau_{r+1}-1\},
    \qquad r\geq0,
\]
and use the epoch learning rate
\[
    \eta_r:=2^{-r/2}.
\]
For each round $n\in I_r$, write
\[
    \eta_n:=\eta_r.
\]
Thus $(\eta_n)_{n\geq0}$ is the nonincreasing round-dependent learning-rate sequence used below.

Expert $P_e$ is activated at time $\tau_e$.  Thus, during epoch $I_r$,
the active experts are precisely $P_0,\ldots,P_r$.

Throughout this construction, $P_e(\sigma)$ denotes the probability that
expert $e$ assigns to the next symbol being $1$.  Before activation,
expert $P_e$ is treated as sleeping and is assigned the aggregate's loss.
Accordingly, define
\[
    \widetilde\ell_{e,n}
    :=
    \begin{cases}
       \ell(q_n,X_n),&n<\tau_e,\\[1mm]
       \ell\bigl(P_e(X_0^{n-1}),X_n\bigr),&n\geq\tau_e,
    \end{cases}
\]
and
\[
    \widetilde L_{e,n}
    :=
    \sum_{t<n}\widetilde\ell_{e,t}.
\]

We use the polynomial prior
\[
    \pi_e:=\frac{1}{(e+1)(e+2)},
    \qquad e\in\mathbb N,
\]
for which
\[
    \sum_{e=0}^{\infty}\pi_e=1.
\]
At the beginning of epoch $I_r$, the exponential weight of an active
expert $e\leq r$ is
\[
    A_{e,r}
    :=
    \pi_e
    \exp\bigl(-\eta_r\widetilde L_{e,\tau_r}\bigr).
\]
At a round $n\in I_r$, its current weight is
\[
    w_{e,n}
    :=
    A_{e,r}
    \exp\left(
       -\eta_r
       \sum_{t=\tau_r}^{n-1}
       \ell\bigl(P_e(X_0^{t-1}),X_t\bigr)
    \right)
    =
    \pi_e\exp\bigl(-\eta_r\widetilde L_{e,n}\bigr).
\]
The aggregate then predicts $1$ with probability
\begin{equation}
    q_n
    :=
    \frac{
       \sum_{e=0}^{r}
       w_{e,n}P_e(X_0^{n-1})
    }{
       \sum_{e=0}^{r}w_{e,n}
    },
    \qquad n\in I_r.
    \label{eq:adaptive-pr-mixture}
\end{equation}

The cumulative losses in $A_{e,r}$ retain all performance information from
earlier epochs; only the learning rate is changed at an epoch boundary.
Thus the exponential weights are re-tempered rather than restarted.  The
normalized base-prior mass of the newly activated expert is
\[
    \frac{\pi_r}{\sum_{e=0}^{r}\pi_e}
    =
    \frac{1}{(r+1)^2}.
\]
At every round this mixture involves only finitely many active experts;
its total computability is verified in Theorem~\ref{thm:pr-aggregation}.
The complete potential analysis and the treatment of the learning-rate
changes are given in Appendix~\ref{app:exp-weights}.
\begin{theorem}[Regret of the primitive-recursive mixture predictor]
\label{thm:pr-aggregation}
Let $Q_{\mathrm{PR}}$ be the probabilistic predictor defined by
\eqref{eq:adaptive-pr-mixture}; that is,
\[
    Q_{\mathrm{PR}}(X_0^{n-1})=q_n .
\]
Then $Q_{\mathrm{PR}}$ is total computable.  Moreover, for every expert index
$e$, every $X\in2^\omega$, and every $N>\tau_e$,
\begin{equation}
\begin{split}
    L_N(Q_{\mathrm{PR}},X)-L_N(P_e,X)
    \leq{}&
    \tau_e
    +\frac{\log(1/\pi_e)}{\eta_{N-1}}
    +\frac18\sum_{n<N}\eta_n .
\end{split}
\label{eq:adaptive-pr-regret}
\end{equation}
Consequently, for an absolute constant $C>0$,
\begin{equation}
    L_N(Q_{\mathrm{PR}},X)
    \leq
    L_N(P_e,X)
    +2^e
    +C\sqrt N\bigl(1+\log(e+2)\bigr).
\label{eq:pr-mixture-regret}
\end{equation}
In particular,
\[
    \limsup_{N\to\infty}
    \frac{L_N(Q_{\mathrm{PR}},X)}{N}
    \leq
    \limsup_{N\to\infty}
    \frac{L_N(P_e,X)}{N}
\]
for every $e$, and hence
\[
    \limsup_{N\to\infty}
    \frac{L_N(Q_{\mathrm{PR}},X)}{N}
    \leq e_{\mathrm{PR}}(X).
\]
Therefore $Q_{\mathrm{PR}}$ is a total computable PR-superpredictor.
\end{theorem}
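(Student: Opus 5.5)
The plan is to prove the two parts separately: total computability of $Q_{\PR}$ first, then the regret bound \eqref{eq:adaptive-pr-regret} via a re-tempered exponential-weights potential. The bound \eqref{eq:pr-mixture-regret} and the superprediction property then follow by elementary estimates.

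\emph{Computability.} I argue by induction on $n$ that the reals $q_0,\dots,q_n$ are computable uniformly from $X_0^{n-1}$. For $n\in I_r$, the formula \eqref{eq:adaptive-pr-mixture} involves only the experts $e\le r$. Each value $P_e(X_0^{t-1})$ is an exact rational obtained from the uniform evaluator. Each $\widetilde L_{e,n}$ is a finite sum of such rationals together with the earlier soft losses $\ell(q_t,X_t)$, $t<\tau_e$, which are computable reals by the induction hypothesis. Exponentials of computable reals are computable, and the denominator is bounded below by $\pi_0e^{-\eta_r n}>0$. Hence a rational approximation $\widehat Q(\sigma,k)$ is obtained by evaluating finitely many computable reals to sufficient precision. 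The only care needed is that this recursion uses the aggregate's own past predictions, so it must be carried out jointly along the prefix; it is still a finite computation for each $\sigma$.

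\emph{Potential argument.} Let every expert participate at all times, with a sleeping expert $e$ ($n<\tau_e$) predicting $q_n$ itself, and define
\[
\Gamma_n:=\frac1{\eta_n}\log\sum_{e=0}^\infty \pi_e e^{-\eta_n\widetilde L_{e,n}},
\]
so that $\Gamma_0=0$. Adding sleeping experts whose prediction is $q_n$ does not change the weighted average: the active experts already average to $q_n$, so the full (infinite, summable) mixture still predicts $q_n$. Because $\ell(\cdot,b)$ is affine, Hoeffding's lemma applied to this full mixture gives, at a fixed rate $\eta$,
\[
\frac1\eta\log\frac{\sum_e \pi_e e^{-\eta\widetilde L_{e,n+1}}}{\sum_e\pi_e e^{-\eta\widetilde L_{e,n}}}\le -\ell(q_n,X_n)+\frac{\eta}{8}.
\]
At an epoch boundary the rate drops from $\eta_r$ to $\eta_{r+1}$. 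Writing $\exp\Gamma=(\sum_e\pi_e(e^{-\widetilde L_e})^{\eta})^{1/\eta}$ as a weighted power mean with exponent $\eta$, the power-mean inequality shows this mean is nondecreasing in $\eta$. So lowering the rate can only decrease $\Gamma$, i.e.\ the re-tempering step is free.

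Telescoping these two facts gives $\Gamma_N\le -L_N(Q_{\PR},X)+\frac18\sum_{n<N}\eta_n$. Bounding $\Gamma_N$ below by its single term $e$, taken at the final rate, gives $\Gamma_N\ge -\log(1/\pi_e)/\eta_{N-1}-\widetilde L_{e,N}$, up to the harmless index convention at $n=N$. Finally, $\widetilde L_{e,N}\le L_N(P_e,X)+\tau_e$, since the two loss sequences differ only at the $\tau_e$ rounds before activation, each by at most $1$. This yields \eqref{eq:adaptive-pr-regret}.

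\emph{Explicit rates.} Use $\tau_e\le 2^e$ and $\log(1/\pi_e)\le 2\log(e+2)$. For $N-1\in I_r$ we have $2^r\le N$, so $1/\eta_{N-1}=2^{r/2}\le\sqrt N$. The sum splits by epochs: $\sum_{n<N}\eta_n\le\sum_{s\le r}2^{s}2^{-s/2}\le C'2^{r/2}\le C'\sqrt N$. Together these give \eqref{eq:pr-mixture-regret}. Dividing by $N$ and letting $N\to\infty$ gives the limsup comparison for each fixed $e$. Since $\mathcal P_{\PR}\subseteq\mathcal Q^{\mathrm{rat}}_{\PR}=\{P_e\}$, taking the infimum over $e$ gives the bound by $e_{\PR}(X)$ and the superpredictor property.

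\emph{Main obstacle.} I expect the delicate step to be the Hoeffding step with infinitely many sleeping experts whose "prediction" is the aggregate itself. One must check that the normalized full mixture indeed predicts $q_n$. One must also check that the sums defining $\Gamma_n$ converge and are positive: they do, because $\sum_e\pi_e=1$ and all losses are finite. Both checks rest on the cancellation that sleeping experts contribute a factor depending only on $\ell(q_n,X_n)$. I would verify this first by writing the one-step ratio explicitly as a convex combination.
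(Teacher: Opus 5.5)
Your proposal is correct and is essentially the paper's argument. Your $\Gamma_n$ is the paper's $-F_n(\eta_n)$, and your power-mean monotonicity is its concavity step $Z_n(\eta')\le Z_n(\eta)^{\eta'/\eta}$. The sleeping-expert cancellation, the comparator bound $\widetilde L_{e,N}\le L_N(P_e,X)+\tau_e$, the dyadic estimates, and computability via a positive lower bound on the normalizer also match; your explicit induction over the past $q_t$ is a slightly more careful version of the paper's computability remark. To get exactly $\eta_{N-1}$ rather than $\eta_N$ in the bound, stop the telescoping at $\frac{1}{\eta_{N-1}}\log Z_N(\eta_{N-1})$ and omit the final re-tempering, as the paper does.
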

\begin{proof}
The countable sleeping-expert potential inequality proved in
Appendix~\ref{app:exp-weights}, applied to the weights defining
\eqref{eq:adaptive-pr-mixture}, gives
\[
    L_N(Q_{\mathrm{PR}},X)-\widetilde L_{e,N}
    \leq
    \frac{\log(1/\pi_e)}{\eta_{N-1}}
    +\frac18\sum_{n<N}\eta_n .
\]
For $N>\tau_e$, the sleeping convention implies
\[
    \widetilde L_{e,N}
    =
    L_{\tau_e}(Q_{\mathrm{PR}},X)
    +L_N(P_e,X)-L_{\tau_e}(P_e,X).
\]
Since losses lie in $[0,1]$,
\[
    L_{\tau_e}(Q_{\mathrm{PR}},X)
    -L_{\tau_e}(P_e,X)
    \leq \tau_e,
\]
which proves \eqref{eq:adaptive-pr-regret}.

The dyadic estimates in Appendix~\ref{app:exp-weights} give
\[
    \eta_{N-1}^{-1}=O(\sqrt N),
    \qquad
    \sum_{n<N}\eta_n=O(\sqrt N),
\]
while
\[
    \tau_e=2^e-1<2^e,
    \qquad
    \log\frac1{\pi_e}
    =
    \log\bigl((e+1)(e+2)\bigr)
    =
    O\bigl(1+\log(e+2)\bigr).
\]
This proves \eqref{eq:pr-mixture-regret} and its asymptotic consequences.
Finally, on every finite input string only finitely many experts are active.
Their rational forecasts and cumulative losses are uniformly computable and
total.  The exponential weights are therefore computable reals.  Moreover,
for $n\in I_r$,
\[
    \sum_{e=0}^{r}w_{e,n}
    \geq
    w_{0,n}
    =
    \pi_0\exp\bigl(-\eta_r\widetilde L_{0,n}\bigr)
    \geq
    \pi_0e^{-n}>0,
\]
so the normalization has an explicit computable positive lower bound.
Hence the quotient in \eqref{eq:adaptive-pr-mixture} can be computed to
arbitrary precision on every finite binary string.  Therefore
$Q_{\mathrm{PR}}$ is a total computable probabilistic predictor.

\end{proof}
By Lemma~\ref{lem:superpredictor-not-pr},
$Q_{\mathrm{PR}}$ is necessarily not PR-computable.
The preceding construction is a total computable growing-expert implementation of
the countable-expert exponential-weights method.  The construction builds on the Aggregating Algorithm and expert-advice
frameworks, particularly their use of prior-weighted countable expert
classes, sleeping experts, and time-varying learning rates
\cite{CesaBianchiLugosi2006,Vovk1990,Vovk1998,
ChernovKalnishkanZhdanovVovk2008}.  Here these ingredients are adapted to
the specific structure of the primitive-recursive class: experts are
introduced according to a computable activation schedule, only finitely many
experts are evaluated at each round, and their accumulated performance is
retained across dyadic epochs.  This yields a total computable mixture over
the full enumerated class together with the explicit regret bound above.

Two immediate consequences are worth recording.  Since every fixed
finite-state, finite-memory, and fixed context-tree predictor is primitive
recursive, $Q_{\mathrm{PR}}$ asymptotically competes with each such predictor
on every individual sequence.  Moreover, if $X$ itself is primitive
recursive, then the predictor
\[
    P_X(\sigma):=X_{|\sigma|}
\]
is primitive recursive and satisfies $L_N(P_X,X)=0$ for every $N$.
Therefore
\[
    \lim_{N\to\infty}\frac{L_N(Q_{\mathrm{PR}},X)}{N}=0.
\]
\section{Relation to Classical Individual-Sequence Prediction}
\label{sec:classical-comparison}
The primitive-recursive benchmark contains the classical finite-state,
fixed-memory, and fixed context-tree classes automatically.  Hence
Theorem~\ref{thm:pr-aggregation} implies asymptotic competition with every
fixed predictor in each of these classes.  

Growing context classes are different because the benchmark itself may change
with the horizon.  Competition with each fixed context tree does not by itself
imply competition with the horizon-dependent optimum.  Ziv and
Merhav~\cite{ZivMerhav2007} obtained such guarantees by using context-tree
classes whose state budgets grow with the sequence length.  Their
fixed-threshold construction suggests a natural future extension of the
present framework: form a countable family of horizon-independent
rational-valued primitive-recursive forecasters and aggregate over the family
to adapt simultaneously to sublinear context budgets.  We leave this
extension, and its quantitative analysis, to separate work.

\section{PPM and Stationary Ergodic Sources}
\label{sec:ppm-ergodic}
\subsection{PPM is primitive recursive}

The preceding individual-sequence theorem becomes especially useful on
stationary ergodic sources because the primitive-recursive class already
contains a deterministic universal predictor.  Prediction by Partial Matching
was introduced as an adaptive context-modeling method for universal compression
\cite{ClearyWitten1984}.
Let \(R_{\mathrm{PPM}}\) denote the binary PPM measure used by
D\k{e}bowski and Steifer~\cite{DebowskiSteifer2022}.  For \(k\geq0\),
let \(R_k\) be its order-\(k\) Laplace-smoothed Markov component.  For a
nonempty string
\[
    \sigma=\sigma_0\cdots\sigma_{n-1}\in2^n,
\]
it is defined by
\[
R_k(\sigma)
=
\begin{cases}
\displaystyle
2^{-k-1}
\prod_{i=k+1}^{n-1}
\frac{
   N(\sigma_{i-k}^{i}\mid\sigma_0^{i-1})+1
}{
   N(\sigma_{i-k}^{i-1}\mid\sigma_0^{i-2})+2
},
& k\leq n-2,\\[4mm]
2^{-n},
& k\geq n-1,
\end{cases}
\]
where \(N(u\mid v)\) denotes the number of occurrences of the finite
word \(u\) in \(v\), with overlaps allowed.  The total PPM measure is
\begin{align*}   
    R_{\mathrm{PPM}}(\sigma)
    &:=
    \sum_{k=0}^{\infty}
       \left(
          \frac{1}{k+1}-\frac{1}{k+2}
       \right)
       R_k(\sigma) \\ &=
    \sum_{k=0}^{\infty}
       \frac{R_k(\sigma)}{(k+1)(k+2)}.
\end{align*}
We set \(R_{\mathrm{PPM}}(\varnothing)=1\).  The induced deterministic
predictor is
\[
    P_{\mathrm{PPM}}(\sigma)
    :=
    \min\operatorname*{arg\,max}_{b\in\{0,1\}}
       R_{\mathrm{PPM}}(\sigma b),
\]
where the minimum fixes the tie-breaking convention.

\begin{lemma}
\label{lem:ppm-pr}
The predictor \(P_{\mathrm{PPM}}\) is primitive recursive.
\end{lemma}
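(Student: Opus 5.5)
The key observation is that the infinite series defining \(R_{\mathrm{PPM}}\) collapses to a finite rational expression once the length of the argument is fixed. For every \(k\geq n-1\) with \(n=|\sigma|\), we have \(R_k(\sigma)=2^{-n}\), independent of \(k\). Therefore
\[
    R_{\mathrm{PPM}}(\sigma)
    =\sum_{k=0}^{n-2}\frac{R_k(\sigma)}{(k+1)(k+2)}
     +2^{-n}\sum_{k\geq n-1}\frac{1}{(k+1)(k+2)}
    =\sum_{k=0}^{n-2}\frac{R_k(\sigma)}{(k+1)(k+2)}+\frac{2^{-n}}{n},
\]
using the telescoping identity \(\sum_{k\geq m}\bigl(\frac1{k+1}-\frac1{k+2}\bigr)=\frac1{m+1}\) with \(m=n-1\). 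For \(n=1\) the first sum is empty, and the formula still holds. Thus \(R_{\mathrm{PPM}}(\sigma)\) is a finite sum of rationals. The plan is to show that the numerator and denominator of this rational number are primitive-recursive functions of the code of \(\sigma\).

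Next I would check that each ingredient is primitive recursive under the fixed coding of strings. The occurrence count \(N(u\mid v)\) is a bounded sum over starting positions \(j\leq|v|-|u|\) of the indicator that \(u\) equals a subword of \(v\). That indicator is a bounded conjunction of coordinate comparisons, so \(N\) is primitive recursive. Each factor of \(R_k(\sigma)\) is a ratio of positive naturals, namely a count plus \(1\) over a count plus \(2\). Hence \(R_k(\sigma)\) equals \(A_k(\sigma)/B_k(\sigma)\), where \(A_k\) and \(B_k\) are bounded products over \(i\). The factor \(2^{-k-1}\) is absorbed into \(B_k\). Since bounded products are primitive recursive, \(A_k\) and \(B_k\) are primitive recursive as functions of \((k,\sigma)\). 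Using bounded sums and products again, I would write
\[
    R_{\mathrm{PPM}}(\sigma)=\frac{\mathcal A(\sigma)}{\mathcal B(\sigma)},
\]
with \(\mathcal B(\sigma)=n\,2^{n}\prod_{k\leq n-2}(k+1)(k+2)B_k(\sigma)\) and \(\mathcal A\) the matching cross-multiplied numerator. This is cruder than a reduced fraction, but that is harmless.

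Finally, \(P_{\mathrm{PPM}}(\sigma)=1\) iff \(R_{\mathrm{PPM}}(\sigma1)>R_{\mathrm{PPM}}(\sigma0)\), with ties broken toward \(0\) by the \(\min\) convention. Both values have positive denominators, so the comparison is equivalent to
\[
    \mathcal A(\sigma1)\,\mathcal B(\sigma0)>\mathcal A(\sigma0)\,\mathcal B(\sigma1).
\]
This is a primitive-recursive relation composed with primitive-recursive concatenation, so its characteristic function is primitive recursive.

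The main obstacle is the first step. The defining series is infinite, and it must be replaced by an exact finite expression. This requires reading the definition carefully at the boundary cases: \(k=n-2\) against \(k\geq n-1\), and the empty product when \(n\) is small. It also requires ensuring that no real-number limit is needed. Everything after that is routine closure of the primitive-recursive functions under bounded sums, bounded products and composition.
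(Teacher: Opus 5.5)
Your proposal is correct and follows essentially the same route as the paper's proof in Appendix~C. You collapse the tail using $R_k(\sigma)=2^{-n}$ for $k\geq n-1$ and the telescoping sum $2^{-n}/n$, express the remaining finite sum with primitive-recursive numerator and denominator via bounded scans and bounded products, and decide the tie-broken comparison of $R_{\mathrm{PPM}}(\sigma0)$ and $R_{\mathrm{PPM}}(\sigma1)$ by cross-multiplication.
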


\begin{proof}
If $|\sigma|=n$, then $R_k(\sigma)=2^{-n}$ for every $k\geq n-1$, and the
corresponding mixture tail telescopes to $2^{-n}/n$.  Hence
$R_{\mathrm{PPM}}(\sigma)$ is an exact bounded rational calculation from
substring counts, and the two extensions $\sigma0$ and $\sigma1$ can be
compared by primitive-recursive integer arithmetic.  Full details are given
in Appendix~\ref{app:ppm-pr}.
\end{proof}

D\k{e}bowski and Steifer prove that the PPM measure is effectively
universal and that its induced predictor is universal on every
Martin-L\"of random point of a stationary ergodic source
\cite[Theorems~12-14]{DebowskiSteifer2022}.  Lemma~\ref{lem:ppm-pr}
strengthens the computability observation needed there: for the particular
PPM definition used in that work, the finite rational calculation is
primitive recursive.

\subsection{Bayes optimality on stationary ergodic random points}

Let $\mu$ be a stationary binary measure, and let $e_\mu$ be the
infinite-past Bayes error defined through its natural two-sided extension in Section~\ref{subsec:stationary_bayes}. 
We first record the probabilistic extension of the effective source
prediction lower bound.

\begin{lemma}[Effective Bayes lower bound for probabilistic predictors]
\label{lem:probabilistic-bayes-lower-bound}
Let $\mu$ be a computable stationary ergodic binary measure, let
$X\in\operatorname{MLR}_\mu$, and let
\[
    Q:2^{<\omega}\to[0,1]
\]
be a total computable probabilistic predictor.  Then
\[
    \liminf_{N\to\infty}
       \frac{L_N(Q,X)}{N}
       \geq e_\mu.
\]
The same conclusion holds relative to an arbitrary oracle.
\end{lemma}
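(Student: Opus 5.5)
We decompose the cumulative loss into a conditional-expectation part and a martingale remainder, then bound the conditional part below by the finite-past Bayes error. For $n\geq0$, write
\[
    p_n(X):=\mu\bigl(X_n=1\mid X_0^{n-1}\bigr)
\]
for the finite-past conditional probability; it is computable from the prefix because $\mu$ is computable. When $\mu([X_0^{n-1}])=0$ the quotient is undefined, but such prefixes never occur along $X\in\MLR_\mu$. Put $q_n:=Q(X_0^{n-1})$. The conditional expectation of $\ell(q_n,X_n)$ given the past is
\[
    c_n:=q_n(1-p_n)+(1-q_n)p_n.
\]
This is affine in $q_n$, so $c_n\geq\min\{p_n,1-p_n\}$ for every $q_n\in[0,1]$. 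Hence
\[
    L_N(Q,X)=\sum_{n<N}c_n+M_N,
    \qquad
    M_N:=\sum_{n<N}\bigl(\ell(q_n,X_n)-c_n\bigr).
\]
Here $M_N$ is a $\mu$-martingale with increments bounded by $1$. Its increments are computable reals, uniformly in the prefix, since both $Q$ and $\mu$ are computable.

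The first step is to show $M_N/N\to0$ on every $X\in\MLR_\mu$. For this we use an effective Azuma--Hoeffding bound. The events
\[
    \{\sup_{N\geq m}|M_N|/N>\varepsilon\}
\]
have $\mu$-measure at most $\sum_{N\geq m}2e^{-\varepsilon^2N/2}$. They are effectively open, uniformly in rational $\varepsilon$ and $m$. Since the tail sums are computable and summable, they form a Martin-L\"of test for each rational $\varepsilon$, or equivalently a Solovay test. Randomness therefore gives $\limsup|M_N|/N\leq\varepsilon$ for all rational $\varepsilon>0$. This is exactly the "effective martingale bound" invoked in Section~\ref{sec:stochastic-examples}, and we cite it in that form. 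One subtlety is that open-ness requires strict inequalities with computable reals. The standard fix is to enumerate cylinders on which a rational approximation certifies the inequality, and we will use it.

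The second step, which we expect to be the main obstacle, is to show
\[
    \frac1N\sum_{n<N}\min\{p_n(X),1-p_n(X)\}\to e_\mu
\]
on every $X\in\MLR_\mu$. The intended route has three parts.
\begin{enumerate}[label=(\alph*)]
\item By stationarity, $p_n(X)=\bar\mu(X_0=1\mid X_{-n}^{-1})$ evaluated on the shifted sequence $T^nX$. This uses the values $X_0^{n-1}$ as the last $n$ past coordinates.
\item The effective L\'evy/martingale convergence theorem (Bienvenu--Day--Hoyrup--Mezhirov--Shen) makes the backward-martingale values $\bar\mu(X_0=1\mid X_{-k}^{-1})$ converge to $p_\infty$ at random points.
\item Combining this with the effective Breiman ergodic theorem, which is the form used by D\k{e}bowski--Steifer for exactly this Ces\`aro average, gives convergence of the Ces\`aro averages of $\min\{p_n,1-p_n\}$ to $\int\min\{p_\infty,1-p_\infty\}\,d\bar\mu=e_\mu$.
\end{enumerate}
The paper already asserts this consequence in Section~\ref{subsec:stationary_bayes}, so we may invoke it directly. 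A cruder alternative is to truncate at a fixed context length $k$. The function $x\mapsto\min\{\bar\mu(X_0=1\mid x_{-k}^{-1}),\cdot\}$ is a computable function of finitely many coordinates. For $n\geq k$ we have the pathwise inequality
\[
    \min\{p_n,1-p_n\}\geq\ldots
\]
This fails, however, because conditioning on more of the past decreases Bayes error, so it goes the wrong direction. The truncation route is therefore unusable for a lower bound, and the Breiman-type statement with the infinite past is genuinely needed. Taking $\liminf$ of $\frac1N L_N(Q,X)\geq\frac1N\sum_{n<N}\min\{p_n,1-p_n\}+\frac{M_N}{N}$ then yields the claim.

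For the relativized statement, every step above relativizes. The martingale test becomes a $\mu$-test c.e.\ in $A$ when $Q$ is $A$-computable. The Breiman/L\'evy step needs only $X\in\MLR_\mu$, and $\MLR_\mu^A\subseteq\MLR_\mu$. Hence for $X\in\MLR_\mu^A$ and every $A$-computable $Q$ the same bound holds. We read the oracle clause in this sense: the randomness is taken relative to the same oracle that computes $Q$.
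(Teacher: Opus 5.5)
Your proposal is correct and follows essentially the same route as the paper. You use the same split of the soft loss into conditional expected loss plus a bounded martingale, the same bound $c_n\geq\min\{p_n,1-p_n\}$, and the same citation of the effective L\'evy and Breiman theorems (as in D\k{e}bowski--Steifer, Theorem~5) for the Ces\`aro convergence to $e_\mu$. The one difference is how the martingale term is made $o(N)$: you apply Azuma--Hoeffding at every $N$ inside a Solovay test, whereas the paper uses orthogonal increments and Chebyshev on the subsequence $N=k^2$, applies effective Borel--Cantelli, and then interpolates. Either works, but to make your test effectively open you must, as the paper does, first semidecide $\mu[\sigma]>0$ so that $p_n$ is computable.
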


\begin{proof}[Proof sketch]
Write the realized loss as its conditional expectation plus a bounded
martingale difference.  A second-moment estimate on the square subsequence,
combined with effective Borel-Cantelli, makes the martingale term $o(N)$ on
every $\mu$-Martin-L\"of random point.  The conditional expected loss is at
least the finite-past Bayes error, whose Ces\`aro average converges to
$e_\mu$ by the effective L\'evy and Breiman theorems.  The treatment of
zero-measure cylinders, uniform effective openness, and relativization is
given in Appendix~\ref{app:effective-bayes-lower-bound}.
\end{proof}
\begin{theorem}[Primitive-recursive prediction is ergodically complete]
\label{thm:pr-ergodic-optimality}
Let \(\mu\) be a computable stationary ergodic binary measure and let
\(X\in\operatorname{MLR}_\mu\).  Then
\[
    e_{\mathrm{PR}}(X)=e_{\mathrm{comp}}(X)=e_\mu
\]
and the PR-superpredictor satisfies
\[
    \lim_{N\to\infty}
       \frac{L_N(Q_{\mathrm{PR}},X)}{N}
       =e_\mu.
\]
Equivalently, \(Q_{\mathrm{PR}}\) competes with \(P_{\mathrm{PPM}}\) on every
individual sequence and inherits its Bayes optimality on stationary
ergodic Martin-L\"of random points.
\end{theorem}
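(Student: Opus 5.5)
The proof is a sandwich: an upper bound from $P_{\mathrm{PPM}}$ transported through Theorem~\ref{thm:pr-aggregation}, and a matching lower bound from Lemma~\ref{lem:probabilistic-bayes-lower-bound}.

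\emph{Upper bound.} By Lemma~\ref{lem:ppm-pr}, $P_{\mathrm{PPM}}\in\mathcal P_{\mathrm{PR}}\subseteq\mathcal Q_{\mathrm{PR}}^{\mathrm{rat}}$, so $P_{\mathrm{PPM}}=P_e$ for some index $e$ of the enumeration. We invoke the D\k{e}bowski--Steifer theorem \cite[Theorems~12-14]{DebowskiSteifer2022}: for every $X\in\MLR_\mu$,
\[
\lim_{N\to\infty}\frac{L_N(P_{\mathrm{PPM}},X)}{N}=e_\mu .
\]
Because $P_{\mathrm{PPM}}$ is a primitive-recursive predictor, this gives $e_{\mathrm{PR}}(X)\le e_\mu$. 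Since every primitive-recursive predictor is total computable, we also get $e_{\comp}(X)\le e_{\mathrm{PR}}(X)$. Applying \eqref{eq:pr-mixture-regret} with this fixed $e$ yields
\[
L_N(Q_{\mathrm{PR}},X)\le L_N(P_{\mathrm{PPM}},X)+2^e+C\sqrt N\bigl(1+\log(e+2)\bigr),
\]
and dividing by $N$ gives $\limsup_N L_N(Q_{\mathrm{PR}},X)/N\le e_\mu$.

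\emph{Lower bound.} $Q_{\mathrm{PR}}$ is total computable by Theorem~\ref{thm:pr-aggregation}, so Lemma~\ref{lem:probabilistic-bayes-lower-bound} gives $\liminf_N L_N(Q_{\mathrm{PR}},X)/N\ge e_\mu$. Together with the upper bound, the limit exists and equals $e_\mu$.

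The same lemma applies to every total computable deterministic predictor $P$, viewed as a $\{0,1\}$-valued probabilistic predictor, which is trivially computable. Hence
\[
\limsup_N \frac{L_N(P,X)}{N}\ge\liminf_N \frac{L_N(P,X)}{N}\ge e_\mu ,
\]
and taking the infimum over $P$ gives $e_{\comp}(X)\ge e_\mu$. Combining the inequalities,
\[
e_\mu\le e_{\comp}(X)\le e_{\mathrm{PR}}(X)\le e_\mu ,
\]
so all three quantities are equal.

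\emph{Main obstacle.} The key step is matching hypotheses with the cited PPM result. I need that its conclusion is stated pathwise for every Martin-L\"of random point of a computable stationary ergodic source, not merely $\mu$-almost surely. I also need that its error benchmark is exactly the natural-extension quantity $e_\mu$ of Section~\ref{subsec:stationary_bayes}, and that its tie-breaking agrees with the convention in $P_{\mathrm{PPM}}$. Any other tie-breaking rule is still primitive recursive, so a mismatch there changes nothing. If the cited theorem bounds only the $\limsup$ of the error rate by $e_\mu$, that is still sufficient, because the lower bound comes independently from Lemma~\ref{lem:probabilistic-bayes-lower-bound}.
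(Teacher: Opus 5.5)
Your proposal is correct and follows the paper's proof essentially step for step. The upper bound comes from $P_{\mathrm{PPM}}\in\mathcal P_{\mathrm{PR}}$ (Lemma~\ref{lem:ppm-pr}) together with the D\k{e}bowski--Steifer universality theorem, transferred to $Q_{\mathrm{PR}}$ by Theorem~\ref{thm:pr-aggregation}, and the matching lower bound comes from Lemma~\ref{lem:probabilistic-bayes-lower-bound}. The only cosmetic differences are two: you obtain $e_{\comp}(X)\geq e_\mu$ by applying Lemma~\ref{lem:probabilistic-bayes-lower-bound} to $\{0,1\}$-valued predictors, where the paper cites \cite[Theorem~5]{DebowskiSteifer2022} directly, and you state $e_{\comp}(X)\le e_{\mathrm{PR}}(X)$ explicitly, which the paper uses only implicitly.
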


\begin{proof}
By Lemma~\ref{lem:ppm-pr},
\(P_{\mathrm{PPM}}\in\mathcal P_{\mathrm{PR}}\).  The PPM universality
theorem of D\k{e}bowski and Steifer gives
\[
    \lim_{N\to\infty}
       \frac{L_N(P_{\mathrm{PPM}},X)}{N}
       =e_\mu.
\]
Therefore \(e_{\mathrm{PR}}(X)\leq e_\mu\).

Conversely, every primitive-recursive predictor is computable.  The
effective source prediction lower bound
\cite[Theorem~5]{DebowskiSteifer2022} gives, for every
\(P\in\mathcal P_{\mathrm{PR}}\),
\[
    \liminf_{N\to\infty}\frac{L_N(P,X)}{N}\geq e_\mu.
\]
The same lower bound holds for every total computable predictor, while
\(P_{\mathrm{PPM}}\) is itself primitive recursive.  Hence
\[
    e_{\mathrm{PR}}(X)=e_{\mathrm{comp}}(X)=e_\mu.
\]

The individual-sequence aggregation guarantee applied to
\(P_{\mathrm{PPM}}\) yields
\[
    \limsup_{N\to\infty}
       \frac{L_N(Q_{\mathrm{PR}},X)}{N}
       \leq e_\mu.
\]
Lemma~\ref{lem:probabilistic-bayes-lower-bound} gives the reverse lower
bound, and therefore the limit exists and equals \(e_\mu\).
\end{proof}

\begin{corollary}
Under the assumptions of
Theorem~\ref{thm:pr-ergodic-optimality}, the realized randomized version of
\(Q_{\mathrm{PR}}\) satisfies
\[
    \frac{L_N(\widehat X,X)}{N}\longrightarrow e_\mu
\]
for almost every private random seed.
\end{corollary}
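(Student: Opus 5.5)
The plan is to define the realized randomized predictor in the usual way and then split its realized loss into the conditional expected loss $L_N(Q_{\PR},X)$ plus a bounded martingale term. The randomized predictor is
\[
    \widehat X_n:=\ind\{U_n<q_n\},
\]
where $U_0,U_1,\ldots$ are i.i.d.\ uniform on $[0,1]$ (the private seed), independent of nothing else because $X$ is a fixed individual sequence. Here $q_n=Q_{\PR}(X_0^{n-1})$ depends only on the deterministic prefix of $X$, so $X$ enters only as a fixed parameter. The realized cumulative loss is $L_N(\widehat X,X)=\sum_{n<N}\ind\{\widehat X_n\neq X_n\}$. Since $\Pr(\widehat X_n\neq X_n)=\ell(q_n,X_n)$, we have
\[
    L_N(\widehat X,X)=L_N(Q_{\PR},X)+M_N,
    \qquad
    M_N:=\sum_{n<N}D_n,
\]
with $D_n:=\ind\{\widehat X_n\neq X_n\}-\ell(q_n,X_n)$.

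The key observation is that, for fixed $X$, the $D_n$ are independent, mean zero, and bounded by $1$ in absolute value. They are in fact independent, not merely martingale differences, because the $q_n$ are deterministic given $X$. Hoeffding's inequality therefore gives $\Pr(|M_N|\geq\varepsilon N)\leq 2e^{-2\varepsilon^2N}$. This bound is summable in $N$, so Borel--Cantelli yields $M_N/N\to0$ almost surely for each rational $\varepsilon$. Intersecting over countably many $\varepsilon$ gives a full-measure set of seeds. Alternatively, one could simply invoke the strong law for bounded independent variables, or Azuma's inequality if the seed is allowed to feed back adaptively.

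Combining this with Theorem~\ref{thm:pr-ergodic-optimality}, which gives $L_N(Q_{\PR},X)/N\to e_\mu$ for the fixed $X\in\MLR_\mu$, we obtain $L_N(\widehat X,X)/N\to e_\mu$ for almost every seed. The null set of bad seeds may depend on $X$; the statement only asks for almost every seed for each such $X$, so this is harmless.

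The only point needing care is the one just noted, that the exceptional null set depends on $X$. If a uniform statement were desired, say for almost every seed simultaneously for all $X\in\MLR_\mu$, one would need to argue jointly, for instance through product measure and Fubini, or through van Lambalgen-type product randomness relative to $X$. The main obstacle here is therefore only interpretive, and I would state the corollary per fixed $X$. There is also a minor technicality: the predictor outputs computable reals rather than exact values, but the comparison $U_n<q_n$ is well defined almost surely, since equality has probability zero.
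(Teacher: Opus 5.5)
Your proposal is correct and is the intended argument. The paper gives no separate proof of this corollary and treats it as immediate from Theorem~\ref{thm:pr-ergodic-optimality}, combined with exactly your decomposition: the realized loss equals $L_N(Q_{\mathrm{PR}},X)$ plus a sum of bounded mean-zero terms (independent once $X$ is fixed), which is $o(N)$ almost surely by Hoeffding and Borel--Cantelli. Your reading that the exceptional null set of seeds may depend on $X$ is the right interpretation of the statement.
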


\section{Primitive-Recursive Switching Beyond Stationarity}
\label{sec:pr-switching}
\subsection{Primitive-recursive interleavings of ergodic sources}
\label{subsec:pr-interleavings}

The preceding theorem assumes that the observed process is itself stationary
and ergodic.  The individual-sequence aggregation theorem also yields an
optimality result for a natural nonstationary class obtained by
primitive-recursively interleaving several stationary ergodic sources.

Let
\[
    \mathcal S=\{0,\ldots,r-1\}
\]
be a finite state set and let
\[
    s:\mathbb N\longrightarrow\mathcal S
\]
be primitive recursive.  The sequence $s$ is regarded as an unknown
deterministic switching schedule: it is not supplied to the universal
aggregate, but it may be encoded in a primitive-recursive comparison
predictor.  For $j<r$, put
\[
    N_j(n):=\bigl|\{t<n:s(t)=j\}\bigr|.
\]

For each state $j<r$, let $\mu_j$ be a computable stationary ergodic
binary measure, and let $Y^{(j)}\in2^\omega$ be its state-specific source
stream.  We assume that the streams are jointly Martin-L\"of random for the
computable product measure
\[
    \boldsymbol{\mu}:=\bigotimes_{j<r}\mu_j.
\]
The observed sequence $X$ is defined by
\begin{equation}
    X_n
       :=
    Y^{(s(n))}_{N_{s(n)}(n)}.
    \label{eq:pr-interleaving}
\end{equation}
Thus the stream associated with state $j$ advances only when $s(n)=j$.
A related hierarchy appears in Markovian bandit problems.  In the classical rested model of Gittins and Jones~\cite{GittinsJones1974}, an unselected arm remains frozen, whereas in Whittle's restless model~\cite{Whittle1988}, passive arms continue to evolve.  This distinction
is separate from state observability.  In hidden Markov bandits, the current
state is not observed directly, and decisions must instead be based on a
filtered belief state~\cite{KrishnamurthyWahlberg2009,
YeminiLeshemSomekhBaruch2021}.  The Markov prediction problem considered
above is analogous to the fully observed case, since the most recent symbol
is contained in the observed prefix; prediction for a hidden Markov source
would involve the additional problem of latent-state inference.

The induced law of $X$ is generally nonstationary, and the state
frequencies $N_j(N)/N$ need not converge.

For each $j<r$, write
\[
    e_j:=e_{\mu_j}
\]
for the infinite-past Bayes error of $\mu_j$.

\begin{theorem}[Primitive-recursive switching among ergodic sources]
\label{thm:pr-switched-ergodic}
Under the assumptions above,
\begin{equation}
    L_N(Q_{\mathrm{PR}},X)
       =
    \sum_{j<r}N_j(N)e_j+o(N).
    \label{eq:pr-switched-optimality}
\end{equation}
More generally, every computable probabilistic predictor
$Q:2^{<\omega}\to[0,1]$ satisfies
\begin{equation}
    \liminf_{N\to\infty}
    \frac{
       L_N(Q,X)
       -
       \sum_{j<r}N_j(N)e_j
    }{N}
    \geq0.
    \label{eq:pr-switched-lower}
\end{equation}
Consequently, the PR-superpredictor is asymptotically
optimal for the nonstationary individual sequence $X$.

If the state frequencies converge,
\[
    \frac{N_j(N)}{N}\longrightarrow\rho_j,
    \qquad j<r,
\]
then
\[
    \frac{L_N(Q_{\mathrm{PR}},X)}{N}
       \longrightarrow
    \sum_{j<r}\rho_j e_j.
\]
\end{theorem}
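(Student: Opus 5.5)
The proof has two halves: an upper bound on the loss of $Q_{\mathrm{PR}}$ and a matching lower bound valid for every computable predictor. The upper bound comes from Theorem~\ref{thm:pr-aggregation} applied to a specific primitive-recursive comparison predictor. The lower bound comes from a statewise version of Lemma~\ref{lem:probabilistic-bayes-lower-bound}. The final assertion about convergent state frequencies is then immediate: divide \eqref{eq:pr-switched-optimality} by $N$.

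\textbf{Upper bound.} Define the switched PPM predictor
\[
    P_s(\sigma):=P_{\mathrm{PPM}}\bigl(\sigma^{(j)}\bigr),
    \qquad j=s(|\sigma|),
\]
where $\sigma^{(j)}$ is the subsequence of $\sigma$ formed by the coordinates $t<|\sigma|$ with $s(t)=j$. This subsequence is extracted by a bounded search using the primitive-recursive schedule $s$. Composing with Lemma~\ref{lem:ppm-pr} shows that $P_s\in\mathcal P_{\mathrm{PR}}\subseteq\mathcal Q^{\mathrm{rat}}_{\mathrm{PR}}$.

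By construction, the errors of $P_s$ on $X$ at times with $s(n)=j$ are exactly the errors of $P_{\mathrm{PPM}}$ on $Y^{(j)}$ during its first $N_j(N)$ rounds. Hence
\[
    L_N(P_s,X)=\sum_{j<r}L_{N_j(N)}\bigl(P_{\mathrm{PPM}},Y^{(j)}\bigr).
\]
By van Lambalgen's theorem, joint randomness for $\boldsymbol\mu$ gives $Y^{(j)}\in\MLR_{\mu_j}$ for each $j$. The D\k{e}bowski--Steifer theorem then gives $L_M(P_{\mathrm{PPM}},Y^{(j)})=Me_j+o(M)$. The counts $N_j(N)$ may grow at arbitrary rates, possibly staying bounded. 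In either case the error is $o(N)$: if $N_j(N)\to\infty$, the term is $o(N_j(N))\subseteq o(N)$; if $N_j(N)$ stays bounded, the term is $O(1)$. Therefore
\[
    L_N(P_s,X)=\sum_{j<r} N_j(N)e_j+o(N),
\]
and \eqref{eq:pr-mixture-regret} transfers this to $Q_{\mathrm{PR}}$ with an additional $O(\sqrt N)$ term.

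\textbf{Lower bound.} This is where I expect the main obstacle. Fix a computable $Q$ and a state $j$. I would decompose $L_N(Q,X)$ into per-state sums and treat each state as an online prediction problem for $Y^{(j)}$. At the $m$-th occurrence of state $j$, the forecast $Q(X_0^{n-1})$ depends on $Y^{(j)}_0{}^{m-1}$ and on the other streams. So it is a predictor for $Y^{(j)}$ that is computable relative to the oracle $A_j:=(Y^{(i)})_{i\neq j}$, together with the primitive-recursive $s$.

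The resulting predictor is total as a function of the prefix of $Y^{(j)}$. It queries $A_j$ only at finitely many positions, which $s$ determines. It has the form
\[
    \sigma\mapsto Q\bigl(\text{interleave}(\sigma,A_j)\bigr),
\]
restricted to lengths $m$ at which state $j$ occurs; I would extend it arbitrarily elsewhere.

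By van Lambalgen's theorem, $Y^{(j)}\in\MLR^{A_j}_{\mu_j}$. The relativized form of Lemma~\ref{lem:probabilistic-bayes-lower-bound} then gives
\[
    \sum_{\text{state-}j\text{ rounds }<N}\ell \;\geq\; N_j(N)e_j-o\bigl(N_j(N)\bigr),
\]
again uniformly as $o(N)$, since bounded $N_j(N)$ is trivial. Summing over $j<r$ yields \eqref{eq:pr-switched-lower}.

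The delicate points are two. First, the predictor must be well defined only along the actual path, whereas the lemma needs a total $A_j$-computable predictor on all strings; the interleaving map supplies one. Second, the lemma's liminf statement must be used in the form "the loss over the first $M$ rounds is at least $Me_j-o(M)$", which is what the proof sketch actually provides.

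Finally, $Q_{\mathrm{PR}}$ is itself a computable probabilistic predictor. So the lower bound applies to it, and combined with the upper bound this gives the equality \eqref{eq:pr-switched-optimality}.
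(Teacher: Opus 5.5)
Your proposal is correct and follows essentially the same route as the paper. For the upper bound, the paper also uses the state-aware PPM expert $P_s$ with the exact statewise loss decomposition and then applies Theorem~\ref{thm:pr-aggregation}; for the lower bound, it also uses the induced predictor computable relative to the join $Z^{(j)}$ of the other streams, the generalized van Lambalgen theorem, and the relativized Bayes lower bound, summed over states. The only cosmetic difference is that you obtain $Y^{(j)}\in\MLR_{\mu_j}$ from van Lambalgen, whereas the paper gets it from randomness conservation under the computable coordinate projection.
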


\begin{proof}[Proof sketch]
A state-aware primitive-recursive comparator runs PPM separately on each
subsequence selected by the schedule, so its loss decomposes as
$\sum_{j<r}N_j(N)e_j+o(N)$.  The superprediction guarantee gives the matching
upper bound for $Q_{\mathrm{PR}}$.  For the lower bound, product randomness
and the generalized van Lambalgen theorem make each component random relative
to the join of the others; the relativized Bayes lower bound can then be
applied componentwise and summed.  The complete argument appears in
Appendix~\ref{app:pr-switching-proof}.
\end{proof}

\begin{remark}
The product assumption is essential for the benchmark in
\eqref{eq:pr-switched-optimality}.  If the state-specific streams are
dependent, observations collected in one state may reveal future symbols in
another state, and the weighted sum of the marginal Bayes errors need not be
optimal.  The same theorem remains valid under the weaker requirement that,
at each activation of state $j$, the conditional law of the next
state-$j$ symbol given the complete interleaved past agrees with its
conditional law given the past of the $j$-th stream alone.
\end{remark}

\begin{remark}
The switching schedule is unknown only to the universal aggregate.  It
is deterministic and primitive recursive, so the state-aware predictor
\eqref{eq:state-aware-ppm} is present in the primitive-recursive expert
class.  The theorem does not require the aggregate to identify the schedule
or the component measures explicitly; expert aggregation transfers the
performance of the appropriate state-aware PPM predictor to the universal
model-free predictor.
\end{remark}

\section{Separations from Finite-State and Fixed Primitive-Recursive Predictors}
\label{sec:separation}

The results in this section exhibit two different advantages of the
PR-superpredictor.  First, a single primitive-recursive signal can defeat the
entire class of finite-state predictors simultaneously.  Second, every fixed
primitive-recursive predictor can be separated sharply from the
PR-superpredictor on a large family of sequences.  The quantifier order in the
second result is necessarily different: no primitive-recursive sequence can
defeat all primitive-recursive predictors, since its position predictor is
itself primitive recursive.  Nevertheless, the fixed-predictor result gives a
stronger pointwise gap and applies to every designated primitive-recursive
rule, including the PPM predictor.

\subsection{A simultaneous separation from finite-state prediction}
\label{subsec:pr-fs-separation}

We use the following standard model.  A deterministic finite-state predictor
is specified by a finite state set $\mathcal S$, an initial state $s_0$, a
transition map $T:\mathcal S\times\{0,1\}\to\mathcal S$, and an output map
$g:\mathcal S\to\{0,1\}$.  Before observing $X_n$, it predicts $g(s_n)$ and
then updates $s_{n+1}=T(s_n,X_n)$.  A randomized finite-state predictor
replaces $g$ by a map $q:\mathcal S\to[0,1]$ and is evaluated by conditional
expected zero-one loss.  Let $\mathcal P_{\mathrm{FS}}$ denote the class of deterministic finite-state
predictors, and let $e_{\mathrm{FS}}(X)$ denote the infimum of the upper
asymptotic error over $\mathcal P_{\mathrm{FS}}$.

\begin{theorem}[Normal observations with a primitive-recursive signal]
\label{thm:normal-pr-fs-separation}
Let $s\in2^\omega$ be a primitive-recursive normal sequence, for example the
binary Champernowne sequence \footnote{Champernowne result is in base 10, but it is easily extended to any base}\cite{Champernowne1933}, and let $\varepsilon\in\mathbb Q$ satisfy
$0<\varepsilon<1/2$.  Let $Z$ be Martin-L\"of random for the
Bernoulli-$\varepsilon$ measure and define
\[
    X_n=s_n\mathbin\oplus Z_n.
\]
Then the following assertions hold.
\begin{enumerate}
\item $X$ is normal.
\item Every deterministic finite-state predictor has asymptotic error $1/2$,
      and every randomized finite-state predictor has asymptotic conditional
      expected loss $1/2$.
\item The primitive-recursive predictor $P_s(\sigma)=s_{|\sigma|}$ has
      asymptotic error $\varepsilon$.
\item Every computable probabilistic predictor has lower asymptotic loss at
      least $\varepsilon$.
\item The PR-superpredictor satisfies
      \[
          \frac{L_N(Q_{\mathrm{PR}},X)}{N}
             \longrightarrow\varepsilon.
      \]
\end{enumerate}
Consequently,
\[
    e_{\mathrm{FS}}(X)=\frac12,
    \qquad
    e_{\mathrm{PR}}(X)=e_{\mathrm{comp}}(X)=\varepsilon,
\]
and the gap between finite-state and primitive-recursive prediction can be
arbitrarily close to $1/2$.
\end{theorem}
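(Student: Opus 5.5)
We prove items (3)--(5) first, since they use only earlier results, and then treat the finite-state assertions (1)--(2), which are the main obstacle.

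\textbf{Items (3)--(5).} For item (3), $P_s$ is primitive recursive because $s$ is. Its innovation sequence is $I_{P_s}^X(n)=X_n\oplus s_n=Z_n$. Hence $L_N(P_s,X)=\sum_{n<N}Z_n$, and the effective strong law on $Z\in\MLR_{\mu_\varepsilon}$ gives asymptotic error $\varepsilon$.

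For item (4), the map $Z\mapsto X=s\oplus Z$ is a computable measure-preserving bijection. It carries $\mu_\varepsilon$ to the product measure $\nu$ with marginals $\Pr(X_n=1)=s_n(1-\varepsilon)+(1-s_n)\varepsilon$. By randomness conservation, $X\in\MLR_\nu$. Under $\nu$ the conditional law of $X_n$ given the past has Bayes error exactly $\varepsilon$. So the argument of Lemma~\ref{lem:probabilistic-bayes-lower-bound} applies verbatim, in its simpler independent, non-stationary form: realized loss equals conditional loss plus a bounded martingale difference, the effective martingale bound makes the difference term $o(N)$, and the conditional loss is at least $\varepsilon$. Equivalently, one may transport $Q$ to the predictor $Q'(\tau)=|s_{|\tau|}-Q(s\oplus\tau)|$ acting on $Z$ and use Proposition~\ref{prop:bernoulli}.

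Item (5) then follows from Theorem~\ref{thm:pr-aggregation} applied to $P_s$, which gives the upper bound, together with item (4), which gives the lower bound. The identities $e_{\mathrm{PR}}(X)=e_{\comp}(X)=\varepsilon$ follow from items (3) and (4).

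\textbf{Items (1)--(2): outline.} The key step is to show that $X$ is normal, i.e.\ that every block $w$ of length $k$ has limiting frequency $2^{-k}$. The tool is relativization: $s$ is computable, so $Z$ is $\mu_\varepsilon$-random relative to $s$. Fix $w$, $k$, and a residue class $n\equiv i \pmod k$, so that the blocks considered are disjoint. Conditionally on the $s$-block $u=s_n^{n+k-1}$, the event $X_n^{n+k-1}=w$ has probability $\varepsilon^{d}(1-\varepsilon)^{k-d}$, where $d$ is the Hamming distance between $u$ and $w\oplus u$.

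By an effective strong law for independent bounded variables, computable from $s$ and holding on $\MLR_{\mu_\varepsilon}^s$, the empirical frequency of $w$ equals the average of these probabilities up to $o(1)$. Normality of $s$ along the residue class, which holds because normality implies that aligned blocks are equidistributed, makes each $u$ occur with frequency $2^{-k}$. Then
\[
\sum_u 2^{-k}\,\varepsilon^{d(u,w)}(1-\varepsilon)^{k-d(u,w)}=2^{-k},
\]
since the map $u\mapsto d(u,w)$ ranges over each value $d$ exactly $\binom{k}{d}$ times. This proves item (1).

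\textbf{Items (1)--(2): finite-state predictors.} For item (2), we invoke the classical fact of Feder, Merhav and Gutman that on a normal sequence every finite-state predictor, deterministic or randomized, has asymptotic conditional expected loss $1/2$. We may also give the direct argument. The finite-state predictability is the limit, as $k\to\infty$, of the Markov predictability of order $k$ (by the FMG theorem). For a normal sequence, every order-$k$ context $c$ is followed by $0$ and by $1$ with frequency $2^{-k-1}$ each, so any rule depending on the last $k$ symbols errs at rate $1/2$.

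For a general finite-state machine, we would use the FMG reduction in~\cite{FederMerhavGutman1992}: its loss is at least the empirical $k$-block predictability minus $O(\log|\mathcal S|/k)$. For normal $X$ that block predictability is $1/2$, and letting $k\to\infty$ gives the lower bound. The upper bound $1/2$ holds for any predictor outputting $1/2$ (randomized case), and for deterministic constant predictors it follows from the fact that the frequency of ones is $1/2$. Hence $e_{\FS}(X)=1/2$, and letting $\varepsilon\downarrow0$ shows the gap can be made arbitrarily close to $1/2$.

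The delicate points are the residue-class and conditional strong-law steps in item (1), and correctly importing the FMG block-predictability bound with its explicit $\log|\mathcal S|/k$ penalty.
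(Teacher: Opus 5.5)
Items (1) and (3)--(5) follow the paper's proof essentially step for step. Item (1) uses the same residue-class argument, with conditional block probabilities $\varepsilon^{d(u,w)}(1-\varepsilon)^{k-d(u,w)}$, aligned-block normality of $s$, and an effective Hoeffding/Borel--Cantelli law. Item (4) uses the same product pushforward and martingale Bayes lower bound. Your alternative for (4) is also valid and slightly shorter: transport $Q$ to $Q'(\tau)=|s_{|\tau|}-Q(s\oplus\tau)|$ acting on $Z$ and invoke Proposition~\ref{prop:bernoulli}. In the prose of (1), $d$ should be the distance between $u$ and $w$ (the weight of $w\oplus u$), not between $u$ and $w\oplus u$; your display has it right.

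For item (2) you take a genuinely different route. The paper applies Agafonov's theorem. For each state, the times at which the automaton sits in that state before predicting form a finite-automaton selection, so the selected subsequence is normal whenever it is infinite. Zeros and ones are therefore balanced on the visits to every recurrent state, and any fixed output there, deterministic or randomized, costs $1/2$ per visit asymptotically. Summing over the finitely many states gives $L_N/N\to1/2$ for every finite-state predictor, in both directions at once.

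You instead compare an $|\mathcal S|$-state machine with order-$k$ Markov predictability via Feder--Merhav--Gutman, and observe that a normal sequence has order-$k$ Markov predictability $1/2$. This buys a quantitative dependence on $k$ and $|\mathcal S|$ and ties the result to the standard finite-state predictability framework. It needs two points of care:
\begin{enumerate}
\item You must use the FMG comparison at each finite $n$ (it holds up to $O(k/n)$ edge terms). Their identity between the $\limsup$-defined finite-state predictability and the limit of Markov predictabilities is not enough, because you need a $\liminf$ bound for each individual machine.
\item The FMG penalty comes from a $\log|\mathcal S|/(k+1)$ conditional-entropy gap passed through a Pinsker-type step. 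It is therefore of order $\sqrt{\log|\mathcal S|/k}$ rather than $\log|\mathcal S|/k$. This is harmless, since it still vanishes as $k\to\infty$.
\end{enumerate}

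One step is missing: the upper half of item (2). The statement asserts that \emph{every} finite-state predictor has asymptotic error $1/2$, so you need $\limsup_N L_N/N\leq1/2$ for an arbitrary machine. Exhibiting constant or $1/2$-valued predictors only proves $e_{\FS}(X)\leq1/2$. The repair is one line. Replacing the output map $g$ by $1-g$ (or $q$ by $1-q$) gives another finite-state predictor on the same automaton. The two cumulative losses sum to exactly $N$, because $\ell(q,b)+\ell(1-q,b)=1$. Your lower bound applied to the complemented predictor then yields the upper bound for the original. The Agafonov route delivers both directions automatically.
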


\begin{proof}
The predictor $P_s(\sigma)=s_{|\sigma|}$ is primitive recursive and its error
sequence is exactly $Z$.  The effective strong law for the computable
Bernoulli-$\varepsilon$ measure gives
\[
    \frac{L_N(P_s,X)}{N}\longrightarrow\varepsilon.
\]

Let $\nu_{s,\varepsilon}$ be the pushforward of the
Bernoulli-$\varepsilon$ measure under the computable homeomorphism
$Z\mapsto s\oplus Z$.  The sequence $X$ is
$\nu_{s,\varepsilon}$-Martin-L\"of random.  Given $X_0^{n-1}$, the next bit
is equal to $s_n$ with conditional probability $1-\varepsilon$ and differs
from $s_n$ with conditional probability $\varepsilon$.  Hence the conditional
Bayes error is $\varepsilon$ at every round.  For any computable
probabilistic predictor $Q$, its conditional expected loss is therefore at
least $\varepsilon$.  The difference between its realized soft loss and its
conditional expected loss is a uniformly computable bounded martingale
difference.  The effective Azuma theorem yields
\[
    \liminf_{N\to\infty}
       \frac{L_N(Q,X)}{N}
       \geq\varepsilon.
\]
The PR-superpredictor competes with $P_s$, so its upper asymptotic loss is at
most $\varepsilon$.  Applying the preceding lower bound to
$Q_{\mathrm{PR}}$ proves the asserted limit and also gives
$e_{\mathrm{PR}}(X)=e_{\mathrm{comp}}(X)=\varepsilon$.

It remains to prove normality and the finite-state lower bound.  Fix a word
$u\in2^k$.  For each residue class $a<k$, consider the disjoint blocks
starting at positions $a+mk$.  Conditional on the corresponding block
$v$ of $s$, the probability that the observed block equals $u$ is
\[
    \varepsilon^{d(u,v)}(1-\varepsilon)^{k-d(u,v)},
\]
where $d$ is Hamming distance.  Normality of $s$ is equivalently normality
under each fixed aligned parsing into $k$-bit blocks.  Thus, in every residue
class $a<k$, the blocks $v\in2^k$ occur with limiting frequency $2^{-k}$.
The Ces\`aro average of the preceding conditional probabilities therefore
converges to
\[
    2^{-k}\sum_{v\in2^k}
       \varepsilon^{d(u,v)}(1-\varepsilon)^{k-d(u,v)}
       =2^{-k}.
\]
Within each residue class, the block indicators depend on disjoint groups of
noise bits and are independent.  Effective Hoeffding bounds followed by
effective Borel-Cantelli show that their empirical averages converge to the
same limits on every Bernoulli-$\varepsilon$ Martin-L\"of random $Z$.
Combining the $k$ residue classes proves that every word $u\in2^k$ has
frequency $2^{-k}$ in $X$, so $X$ is normal.

Finally, fix a finite-state predictor.  For each of its states, select the
symbols read at the times when the automaton is in that state immediately
before making its prediction.  This is a finite-automaton selection rule.
By Agafonov's theorem, every infinite subsequence of the normal
sequence $X$ selected by a finite automaton is itself normal
\cite{Agafonov1968,BecherHeiber2013}.  Hence, within the positions
selected by any state visited infinitely often, zeros and ones occur
with equal limiting frequency.  Hence zeros and ones have equal
limiting frequency during the visits to every state that is used infinitely
often.  A deterministic output at that state therefore has error $1/2$, and
a randomized output has conditional expected loss $1/2$.  States visited
only finitely often contribute $o(N)$.  Summing over the finitely many states
proves the finite-state assertions.
\end{proof}

\subsection{A sharp separation from every fixed PR predictor}
\label{subsec:fixed-pr-separation}

The preceding theorem has the strong simultaneous form
\[
    \exists X\ \forall P\in\mathcal P_{\mathrm{FS}}.
\]
The corresponding quantifier order cannot hold for the entire PR class on a
primitive-recursive sequence.  The following result instead shows the true
advantage of the PR-superpredictor over any designated PR rule: for every
$P\in\mathcal P_{\mathrm{PR}}$, there is a computable full-support source on
which $P$ is almost always wrong, while another PR expert, and hence the
PR-superpredictor, is almost always correct.

For a deterministic predictor $P$, define its complementary predictor by
\[
    P^\perp(\sigma):=1-P(\sigma).
\]
If $P$ is primitive recursive, then so is $P^\perp$.

\begin{theorem}[Full-measure separation from a fixed PR predictor]
\label{thm:fixed-pr-full-measure-separation}
Let $P\in\mathcal P_{\mathrm{PR}}$ and let
$\varepsilon\in\mathbb Q$ satisfy $0<\varepsilon<1/2$. There is a computable nonatomic
probability measure $\nu_{P,\varepsilon}$ with full support on $2^\omega$ such
that every $X\in\operatorname{MLR}_{\nu_{P,\varepsilon}}$ satisfies
\[
    \frac{L_N(P,X)}{N}\longrightarrow1-\varepsilon,
\]
\[
    \frac{L_N(P^\perp,X)}{N}\longrightarrow\varepsilon,
\]
and
\[
    \frac{L_N(Q_{\mathrm{PR}},X)}{N}
       \longrightarrow\varepsilon.
\]
Consequently,
\[
    \frac{L_N(P,X)-L_N(Q_{\mathrm{PR}},X)}{N}
       \longrightarrow1-2\varepsilon.
\]
Thus the asymptotic advantage of the PR-superpredictor over any fixed PR
predictor can be made arbitrarily close to one.
\end{theorem}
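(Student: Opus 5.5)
Define $\nu=\nu_{P,\varepsilon}$ as the law of the following adaptive process.
Having generated $X_0^{n-1}$, compute $b_n:=P(X_0^{n-1})$, which is primitive recursive and hence total computable.
Then set $X_n=1-b_n$ with probability $1-\varepsilon$ and $X_n=b_n$ with probability $\varepsilon$.
Equivalently, $\nu$ is the pushforward of the Bernoulli-$\varepsilon$ measure $\mu_\varepsilon$ under the map $Z\mapsto X$ given by $X_n=P^\perp(X_0^{n-1})\oplus Z_n$, that is, the inverse $\Phi_{P^\perp}^{-1}$ of the causal innovation transform in Proposition~\ref{prop:innovation-homeomorphism}.

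The basic properties of $\nu$ come from this representation.
\begin{itemize}
\item Every cylinder has mass $\nu([\sigma])=\prod_{n<|\sigma|}\varepsilon^{\ind\{\sigma_n=P(\sigma_0^{n-1})\}}(1-\varepsilon)^{\ind\{\sigma_n\neq P(\sigma_0^{n-1})\}}$. This is an exact rational computable uniformly in $\sigma$, so $\nu$ is computable.
\item Since $0<\varepsilon<1/2$, every cylinder has positive mass, so $\nu$ has full support.
\item Since $\nu([\sigma])\leq(1-\varepsilon)^{|\sigma|}\to0$, the measure is nonatomic.
\end{itemize}

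The key step is to identify the innovation sequences on random points.
Because $\Phi_{P^\perp}$ is a computable measure-preserving homeomorphism from $(2^\omega,\nu)$ onto $(2^\omega,\mu_\varepsilon)$ with computable inverse, randomness conservation applies in both directions.
Hence $X\in\MLR_\nu$ if and only if $Z:=I^X_{P^\perp}\in\MLR_{\mu_\varepsilon}$. Directly, one can also pull a $\mu_\varepsilon$-test back along the computable map, since preimages of effectively open sets are effectively open with the same measure.
The effective strong law for $\mu_\varepsilon$ then gives $L_N(P^\perp,X)/N=\frac1N\sum_{n<N}Z_n\to\varepsilon$.
The error sets of $P$ and $P^\perp$ are complementary, so $L_N(P,X)=N-L_N(P^\perp,X)$, and therefore $L_N(P,X)/N\to1-\varepsilon$.

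For the aggregate, $P^\perp\in\mathcal P_{\mathrm{PR}}\subseteq\mathcal Q^{\mathrm{rat}}_{\mathrm{PR}}$, so Theorem~\ref{thm:pr-aggregation} gives $\limsup_N L_N(Q_{\mathrm{PR}},X)/N\leq\varepsilon$ on every sequence.

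For the matching lower bound I will reproduce the argument of Theorem~\ref{thm:normal-pr-fs-separation}, item 4, with the fixed signal $s_n$ replaced by the adaptive signal $P^\perp(X_0^{n-1})$.
Under $\nu$, conditional on $X_0^{n-1}$, the next bit equals $P^\perp(X_0^{n-1})$ with probability $1-\varepsilon$. So the conditional Bayes error is exactly $\varepsilon$ at every round, and any $Q$ has conditional expected loss at least $\varepsilon$.

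Set $D_n:=\ell(q_n,X_n)-\mathbb E_\nu[\ell(q_n,X_n)\mid X_0^{n-1}]$. These are bounded martingale differences that are uniformly computable from $Q_{\mathrm{PR}}$ and $\nu$ (rational conditional probabilities, computable $q_n$).
The effective Azuma/Borel--Cantelli argument, via a second-moment or exponential bound on the events $|\sum_{n<N}D_n|>\delta N$ with a summable tail, yields $\frac1N\sum_{n<N}D_n\to0$ on every $X\in\MLR_\nu$.
This gives $\liminf_N L_N(Q_{\mathrm{PR}},X)/N\geq\varepsilon$, hence the limit $\varepsilon$. Subtracting the two limits gives $1-2\varepsilon$.

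The main obstacle is making this martingale step rigorous as a Martin-L\"of test. The deviation sets must be uniformly effectively open with computably bounded $\nu$-measure, even though $q_n$ is only a computable real.
I would handle it as Lemma~\ref{lem:probabilistic-bayes-lower-bound} does:
\begin{enumerate}
\item Approximate $q_n$ to within $2^{-n}$ by a rational, which changes the loss by $O(1)$ in total.
\item Use the square subsequence $N=m^2$ with Chebyshev bounds $O(1/(\delta^2m^2))$, giving a Solovay test.
\item Interpolate between consecutive squares, using that losses are bounded.
\end{enumerate}
This effective Solovay-test bookkeeping is where the care is needed; the remaining steps are direct.
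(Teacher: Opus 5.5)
Your proposal is correct and takes essentially the same route as the paper. Both define $\nu_{P,\varepsilon}$ as the pushforward of Bernoulli-$\varepsilon$ under the inverse causal innovation map for $P^\perp$, use randomness conservation plus the effective strong law for the two error rates, use superprediction against $P^\perp$ for the upper bound, and use the effective martingale argument with constant conditional Bayes error $\varepsilon$ for the lower bound. The only cosmetic difference is in checking the measure's properties: you read off computability, full support, and nonatomicity from the explicit product formula for $\nu([\sigma])$, whereas the paper derives them from the homeomorphism (cylinders pull back to cylinders of the same length, nonempty open sets to nonempty open sets, points to points).
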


\begin{proof}[Proof sketch]
Let $Z$ be Bernoulli-$\varepsilon$ and define
$X_n=P^\perp(X_0^{n-1})\mathbin\oplus Z_n$.  This causal transformation is a
computable homeomorphism and transports the Bernoulli measure to a computable
nonatomic full-support measure.  Along every random image, the losses of
$P^\perp$ and $P$ are respectively the frequencies of $Z_n=1$ and $Z_n=0$.
The PR-superprediction bound and the effective Bayes lower bound then give the
limit for $Q_{\mathrm{PR}}$.  All measure-theoretic and effective details are
proved in Appendix~\ref{app:fixed-pr-separation-proof}.
\end{proof}

The sense in which the preceding family is large can be stated without
restricting attention to random points.  Define
\[
    \mathcal A_{P,\varepsilon}
       :=
    \left\{
       X\in2^\omega:
       \frac{L_N(P^\perp,X)}{N}\longrightarrow\varepsilon
    \right\}.
\]

\begin{proposition}[Size of the fixed-predictor separation class]
\label{prop:fixed-pr-separation-size}
For every $P\in\mathcal P_{\mathrm{PR}}$ and every rational
$0<\varepsilon<1/2$, the class $\mathcal A_{P,\varepsilon}$ has the following
properties.
\begin{enumerate}
\item It is dense and has cardinality continuum.
\item It has $\nu_{P,\varepsilon}$-measure one.
\item It has fair-coin measure zero.
\item It is meager in Cantor space.
\item Every $X\in\mathcal A_{P,\varepsilon}$ satisfies
      \[
          \liminf_{N\to\infty}
          \frac{L_N(P,X)-L_N(Q_{\mathrm{PR}},X)}{N}
          \geq1-2\varepsilon.
      \]
\end{enumerate}
\end{proposition}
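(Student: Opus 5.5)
The plan is to reduce every item to a statement about the innovation sequence of $P^\perp$, using the causal transform of Proposition~\ref{prop:innovation-homeomorphism}. Let $\Phi:=\Phi_{P^\perp}$, so $\Phi(X)=Z$ with $Z_n=X_n\oplus P^\perp(X_0^{n-1})$. Since $P^\perp$ is total computable, $\Phi$ is a computable bijection of Cantor space with a computable causal inverse. Being causal in both directions, it is a homeomorphism: $X_0^{n-1}$ determines $Z_0^{n-1}$ and conversely. By definition $L_N(P^\perp,X)=\sum_{n<N}Z_n$. Hence
\[
    \mathcal A_{P,\varepsilon}=\Phi^{-1}(F_\varepsilon),
    \qquad
    F_\varepsilon:=\Bigl\{Z:\tfrac1N\textstyle\sum_{n<N}Z_n\to\varepsilon\Bigr\}.
\]
I also use that $\nu_{P,\varepsilon}$ is, as in the proof of Theorem~\ref{thm:fixed-pr-full-measure-separation}, the pushforward of the Bernoulli-$\varepsilon$ measure under $\Phi^{-1}$.

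For item 1, $F_\varepsilon$ is dense: any finite prefix can be extended by a fixed sequence of ones-frequency $\varepsilon$, for example a periodic pattern with rational density $\varepsilon$. It has cardinality continuum: inserting $0$ or $1$ freely at the positions of a sparse set such as the perfect squares does not change the limiting frequency, which yields $2^{\aleph_0}$ members. Homeomorphisms preserve density and cardinality. Item 2 follows because the strong law gives Bernoulli-$\varepsilon$ measure one to $F_\varepsilon$, so its $\Phi^{-1}$-image has $\nu_{P,\varepsilon}$-measure one.

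For item 3, I would show that under the fair-coin measure $\lambda$ the pushforward $\Phi_*\lambda$ is again $\lambda$. Given $X_0^{n-1}$, the bit $X_n$ is fair and independent of the past, and $Z_n$ is $X_n$ XORed with a bit determined by the past. So $Z_n$ is fair and independent of $Z_0^{n-1}$, which equivalently follows by computing cylinder measures, each length-$n$ $Z$-cylinder being the image of exactly one $X$-cylinder. Thus $\lambda(\mathcal A_{P,\varepsilon})=\lambda(F_\varepsilon)=0$ by the strong law with limit $1/2\neq\varepsilon$.

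For item 4, it suffices to show that $F_\varepsilon$ is meager, since homeomorphisms preserve meagerness. For each $M$, let
\[
    U_M:=\Bigl\{Z:\exists N>M,\ \tfrac1N\textstyle\sum_{n<N}Z_n>\tfrac{1+2\varepsilon}{2}\Bigr\}\quad\text{(well above }\varepsilon).
\]
Each $U_M$ is open, and it is dense because any prefix can be extended by a long block of ones. Then $F_\varepsilon\subseteq 2^\omega\setminus\bigcap_M U_M$, which is a countable union of the closed nowhere dense sets $2^\omega\setminus U_M$. This is the standard step and is the one needing care, though it is not deep.

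For item 5, for any deterministic $P$ the losses satisfy $L_N(P,X)+L_N(P^\perp,X)=N$ identically. Hence on $\mathcal A_{P,\varepsilon}$ we get $L_N(P,X)/N\to1-\varepsilon$. Since $P^\perp\in\mathcal P_{\mathrm{PR}}\subseteq\mathcal Q_{\mathrm{PR}}^{\mathrm{rat}}$, Theorem~\ref{thm:pr-aggregation} gives
\[
    \limsup_N L_N(Q_{\mathrm{PR}},X)/N\le\varepsilon .
\]
Subtracting gives $\liminf_N\bigl(L_N(P,X)-L_N(Q_{\mathrm{PR}},X)\bigr)/N\ge1-2\varepsilon$. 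This needs no randomness assumption, which is why item 5 holds on all of $\mathcal A_{P,\varepsilon}$.
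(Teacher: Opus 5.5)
Your proposal is correct and follows the paper's proof. Both arguments pull $\mathcal A_{P,\varepsilon}$ back through the causal innovation homeomorphism of $P^\perp$ to the frequency-$\varepsilon$ class. Both use that this map permutes the strings of each length, so it preserves fair-coin measure, and that $\nu_{P,\varepsilon}$ is the transported Bernoulli-$\varepsilon$ measure. Both finish item 5 with $L_N(P,X)+L_N(P^\perp,X)=N$ plus superprediction against $P^\perp$. The only difference is that you verify the density, cardinality, and meagerness of the frequency-$\varepsilon$ class explicitly, while the paper simply cites these as standard.
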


\begin{proof}[Proof sketch]
The causal innovation map is a length-preserving homeomorphism carrying
$\mathcal A_{P,\varepsilon}$ to the class of sequences with limiting one
frequency $\varepsilon$.  Standard measure, category, and cardinality facts
for that frequency class give the first four assertions; the final loss gap
follows from $L_N(P,X)+L_N(P^\perp,X)=N$ and superprediction relative to
$P^\perp$.  See Appendix~\ref{app:fixed-pr-separation-size} for details.
\end{proof}

\begin{remark}[Application to PPM]
\label{rem:ppm-fixed-separation}
Since $P_{\mathrm{PPM}}\in\mathcal P_{\mathrm{PR}}$, the preceding theorem
applies in particular to the deterministic PPM predictor.  For every
rational $0<\varepsilon<1/2$, there is a computable nonatomic full-support
source under which PPM has asymptotic error $1-\varepsilon$, whereas the
PR-superpredictor has asymptotic loss $\varepsilon$ on every Martin-L\"of
random realization.  This does not conflict with the stationary-ergodic
universality of PPM: the measure $\nu_{P_{\mathrm{PPM}},\varepsilon}$ is a
causally tailored, generally nonstationary source.
\end{remark}

\subsection{Finite-density abundance and high-hierarchy PR witnesses}
\label{subsec:pr-hierarchy-separation}

The full-measure theorem gives an uncountable separation class, but its
random members are generally not primitive recursive.  We now show that the
separation also has primitive-recursive witnesses carrying arbitrarily high
primitive-recursive growth, and that near-random error behavior is abundant
among their finite residue blocks.

For a prefix $\sigma\in2^{<\omega}$, a block $u\in2^L$, and a deterministic
predictor $P$, define the loss incurred on the continuation $u$ by
\[
    \Delta L_P(\sigma,u)
       :=
    \sum_{t<L}
       \mathbf 1\{P(\sigma u_0^{t-1})\neq u_t\}.
\]

\begin{lemma}[Finite-block innovation bijection]
\label{lem:finite-block-innovation-bijection}
For every deterministic predictor $P$, every prefix $\sigma$, and every
$L\geq1$, the map $\Phi_{P,\sigma}:2^L\to2^L$ defined by
\[
    \Phi_{P,\sigma}(u)_t
       :=u_t\mathbin\oplus P(\sigma u_0^{t-1}),
    \qquad t<L,
\]
is a bijection.  Consequently, for every integer $0\leq m\leq L$,
\[
    \left|
       \{u\in2^L:\Delta L_P(\sigma,u)\leq m\}
    \right|
       =
    \sum_{j=0}^{m}\binom Lj.
\]
In particular, for every $\delta>0$,
\[
    2^{-L}
    \left|
       \left\{
          u\in2^L:
          \left|\Delta L_P(\sigma,u)-\frac L2\right|>\delta L
       \right\}
    \right|
       \leq
    2e^{-2\delta^2L}.
\]
\end{lemma}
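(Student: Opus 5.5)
The proof is a finite-length, fixed-prefix version of the causal inversion argument of Proposition~\ref{prop:innovation-homeomorphism}.

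\emph{Bijectivity.} I will write down an explicit inverse. Given $v\in2^L$, define $u$ recursively by
\[
    u_t:=v_t\oplus P(\sigma u_0^{t-1}),\qquad t<L.
\]
This is well defined because $P(\sigma u_0^{t-1})$ depends only on coordinates of $u$ already constructed. By induction on $t$, the two maps invert each other coordinatewise: if $v=\Phi_{P,\sigma}(u)$, then the recursion reproduces $u_t$, since both sides use the same prefix $\sigma u_0^{t-1}$; conversely, applying $\Phi_{P,\sigma}$ to the reconstructed $u$ returns $v$. Hence $\Phi_{P,\sigma}$ is a bijection of $2^L$. Alternatively, it suffices to show injectivity, since $2^L$ is finite. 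Take the first coordinate $t$ at which $u\ne u'$; the predictions $P(\sigma u_0^{t-1})$ and $P(\sigma u_0'^{t-1})$ coincide there, so the $t$-th outputs differ.

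\emph{Counting.} By definition, $\Delta L_P(\sigma,u)$ equals the Hamming weight of $\Phi_{P,\sigma}(u)$. The bijection therefore transports the set $\{u:\Delta L_P(\sigma,u)\le m\}$ onto the set of words of weight at most $m$, which has $\sum_{j\le m}\binom Lj$ elements.

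\emph{Tail bound.} By the same bijection, the normalized count $2^{-L}\bigl|\{u:|\Delta L_P(\sigma,u)-L/2|>\delta L\}\bigr|$ equals the probability that $L$ independent fair bits have sum $S$ with $|S-L/2|>\delta L$. Hoeffding's inequality for $[0,1]$-valued independent variables bounds each one-sided tail by $e^{-2\delta^2L}$, and the union of the two tails gives $2e^{-2\delta^2L}$.

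There is no real obstacle. The only point requiring care is the causality: the prediction at step $t$ depends only on $\sigma$ and $u_0^{t-1}$, not on $u_t$. This is exactly what makes the recursive inverse well defined.
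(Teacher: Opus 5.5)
Your proposal is correct and follows essentially the same route as the paper: the recursive inverse $u_t=v_t\oplus P(\sigma u_0^{t-1})$, the identification of $\Delta L_P(\sigma,u)$ with the Hamming weight of $\Phi_{P,\sigma}(u)$, and Hoeffding's inequality applied to independent fair bits obtained by transporting the uniform measure through the bijection. Your alternative first-differing-coordinate injectivity argument is also valid, since injectivity implies bijectivity on the finite set $2^L$.
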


\begin{proof}
Given $z=\Phi_{P,\sigma}(u)$, reconstruct $u$ recursively by
\[
    u_t
      =z_t\mathbin\oplus P(\sigma u_0^{t-1}),
    \qquad t<L.
\]
Thus $\Phi_{P,\sigma}$ is a bijection.  Moreover,
$\Delta L_P(\sigma,u)$ is exactly the Hamming weight of
$\Phi_{P,\sigma}(u)$.  Therefore
\[
\begin{aligned}
    \left|\{u\in2^L:\Delta L_P(\sigma,u)\leq m\}\right|
      &=\left|\{z\in2^L:|z|_1\leq m\}\right|\\
      &=\sum_{j=0}^{m}\binom Lj,
\end{aligned}
\]
where $|z|_1$ is the number of ones in $z$.

For the concentration bound, let $U$ be uniformly distributed on $2^L$.
Since $\Phi_{P,\sigma}$ is a bijection,
$Z:=\Phi_{P,\sigma}(U)$ is also uniform on $2^L$.  Hence
$Z_0,\ldots,Z_{L-1}$ are independent Bernoulli-$1/2$ random variables and
\[
    \Delta L_P(\sigma,U)=\sum_{t<L}Z_t.
\]
Hoeffding's inequality gives
\[
    \Pr\!\left(
       \left|\sum_{t<L}Z_t-\frac L2\right|>\delta L
    \right)
       \leq 2e^{-2\delta^2L}.
\]
Because $U$ is uniform on $2^L$, the probability on the left equals
\[
    2^{-L}
    \left|
       \left\{
          u\in2^L:
          \left|\Delta L_P(\sigma,u)-\frac L2\right|>\delta L
       \right\}
    \right|,
\]
which proves the final estimate.
\end{proof}

We use a Champernowne-style encoding of function values.  Fix a
primitive-recursive pairing function $\langle\cdot,\cdot\rangle$.  At level
$k\geq1$, take $2^k$ values and retain $k$ low-order bits from each value.
Thus the level has length
\[
    L_k=k2^k.
\]
For $f:\mathbb N\to\mathbb N$, define
\[
    B_k(f)
       :=
    \operatorname{bin}_k
       \bigl(f(\langle k,0\rangle)\bmod2^k\bigr)
    \cdots
    \operatorname{bin}_k
       \bigl(f(\langle k,2^k-1\rangle)\bmod2^k\bigr),
\]
where $\operatorname{bin}_k(j)$ is the $k$-bit expansion of
$0\leq j<2^k$, and put
\[
    C(f):=B_1(f)B_2(f)B_3(f)\cdots.
\]

\begin{theorem}[High-hierarchy primitive-recursive witnesses]
\label{thm:high-hierarchy-pr-witnesses}
Let $P\in\mathcal P_{\mathrm{PR}}$, and let
$G:\mathbb N\to\mathbb N$ be any primitive-recursive function.  There is a
primitive-recursive function $f$ such that:
\begin{enumerate}
\item $X:=C(f)$ is primitive recursive;
\item the innovation sequence $I_P^X$ is normal;
\item
      \[
          \frac{L_N(P,X)}{N}\longrightarrow\frac12;
      \]
\item
      \[
          \frac{L_N(Q_{\mathrm{PR}},X)}{N}
             \longrightarrow0;
      \]
\item $G$ is uniformly recoverable from $f$ by primitive-recursive
      operations.
\end{enumerate}
In particular, for every fixed level $\mathcal E^m$ of a standard
primitive-recursive hierarchy that contains the coding operations used below,
if $G\notin\mathcal E^m$, then $f\notin\mathcal E^m$.
\end{theorem}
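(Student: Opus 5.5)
The plan is to build $X$ causally, level by level, so that the innovation sequence $I_P^X$ equals a prescribed primitive-recursive normal sequence, while a small set of designated bits carries $G$. By Proposition~\ref{prop:innovation-homeomorphism}, choosing $I_P^X$ amounts to choosing $X$. The inverse map $X_n=I_P^X(n)\oplus P(X_0^{n-1})$ is obtained from $P$ and the target innovation by a single primitive recursion on prefix codes, exactly as in the proof of Lemma~\ref{lem:superpredictor-not-pr}. Any innovation sequence that is primitive recursive therefore gives a primitive-recursive $X$. I would then read $f$ off $X$: on argument $\langle k,i\rangle$ with $i<2^k$, let $f$ be the integer whose $k$-bit expansion is the $i$th $k$-bit word of the level-$k$ block of $X$. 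This makes $C(f)=X$ hold by construction. The values of $f$ at codes not of this form are free, and I reserve them for $G$.

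\textbf{Encoding $G$.} The clean route is to put $G$ into $f$ at those non-coding arguments. For example, set $f(\langle 0,x\rangle)=G(x)$ (level $0$ is not used by $C$), and more generally set $f(y)=G(y)$ whenever $y$ is not of the form $\langle k,i\rangle$ with $k\ge1$ and $i<2^k$. Then $G(x)=f(\langle0,x\rangle)$ is recovered by one composition with the pairing function, which proves item~5. The hierarchy corollary follows because $\mathcal E^m$ is closed under composition with the coding operations. If instead one insists that $G$ affect $X$ itself (the "carriers" in the abstract), I would alter the target innovation on a sparse set of positions: at level $k$, a fixed block of $O(\log k)$ or $o(L_k)$ bits encodes $G(k)\bmod 2^{k}$. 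Sparsity ($o(N)$ changed positions) preserves normality and the limiting frequencies in items 2--3. I would present the first route and mention the second.

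\textbf{The innovation target.} I would take $I_P^X:=c$, the binary Champernowne sequence, which is primitive recursive and normal. Item~2 is then immediate, and if the sparse carrier modification is used, normality survives it because $o(N)$ edits do not change block frequencies. Item~3 follows since $L_N(P,X)=\sum_{n<N}I_P^X(n)$, and normality gives one-frequency $1/2$. Item~4 follows because $X$ is primitive recursive, so $P_X(\sigma)=X_{|\sigma|}$ is a PR expert with zero loss, and Theorem~\ref{thm:pr-aggregation} gives $L_N(Q_{\mathrm{PR}},X)=O(\sqrt N)$ with the constant depending on the index of $P_X$. Item~1 is the primitive recursion described above, together with the fact that level boundaries $\sum_{j<k}j2^j$ are primitive recursive. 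Lemma~\ref{lem:finite-block-innovation-bijection} shows that this construction is typical and not special: within each level block, all but a $2e^{-2\delta^2L_k}$ fraction of continuations already give $P$ error near $1/2$. I would cite it for the "abundance" remark but do not need it for the main proof.

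\textbf{Main obstacle.} The delicate step is checking that the extraction of $f$ from $X$, and the case split on whether $y$ codes a level position, are genuinely primitive recursive. The recovery of $G$ must also stay within the low levels of the hierarchy, so that "$G\notin\mathcal E^m\Rightarrow f\notin\mathcal E^m$" holds. This requires the decoding of $\langle\cdot,\cdot\rangle$, the bounded test $i<2^k$, and the bit extraction to lie in a fixed small class such as $\mathcal E^3$. That is exactly the hypothesis "contains the coding operations," and I would state this explicitly, referring to the conventions of Appendix~\ref{app:pr-background}.
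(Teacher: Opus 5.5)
Your proof is correct and is essentially the paper's: the same causal inversion $X_n=P(X_0^{n-1})\oplus Z_n$ against a primitive-recursive normal $Z$, the same residue table read off the level-$k$ blocks, and the same zero-loss position expert $P_X$ for item~4. The only difference is that you place $G$ at arguments that $C$ never reads, $f(\langle 0,x\rangle)=G(x)$, whereas the paper stores it in the quotients, $f(\langle k,i\rangle)=2^kG(\langle k,i\rangle)+r(k,i)$ with $G(y)=\lfloor f(y)/2^k\rfloor$; your recovery needs only composition with pairing, while the paper's puts $G$ inside the very values that $C$ reads.

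Two small corrections. Drop the clause ``more generally $f(y)=G(y)$,'' which contradicts $f(\langle 0,x\rangle)=G(x)$ and would lose $G$ on the coding arguments. Also drop the optional second route: storing $G(k)\bmod 2^k$ in $X$ does not make $G$ recoverable, and it is unnecessary, since in the paper $X$ is independent of $G$ as well.
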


\begin{proof}[Proof sketch]
Choose a primitive-recursive normal sequence $Z$ and define
$X_n=P(X_0^{n-1})\mathbin\oplus Z_n$, so that $I_P^X=Z$ and $P$ has error
$1/2$.  The values of $X$ are then placed in the low-order residues of a
primitive-recursive function $f$, while the values of $G$ are stored in the corresponding quotients.  This preserves primitive recursiveness and permits
uniform recovery of $G$; because $X$ itself is a PR expert, the
PR-superpredictor has vanishing average loss.  The complete residue coding
and hierarchy verification are in Appendix~\ref{app:high-hierarchy-witnesses}.
\end{proof}

\begin{remark}[Finite-density interpretation]
\label{rem:finite-density-pr-separation}
At level $k$, the residue table ranges over all $2^{L_k}$ binary blocks of
length $L_k=k2^k$.  Lemma~\ref{lem:finite-block-innovation-bijection} shows,
uniformly over all preceding levels, that the proportion of residue blocks on
which $P$ has error outside
$[1/2-\delta,1/2+\delta]$ is at most
$2e^{-2\delta^2L_k}$.  Hence a proportion tending to one exponentially fast produces near-random
prediction error for the fixed predictor $P$ on that level.  This is a finite-density statement, uniform in the earlier residue choices. Primitive-recursive separating witnesses can additionally encode functions from arbitrarily high levels of a fixed primitive-recursive hierarchy.
\end{remark}

\begin{remark}[Comparison of the quantifiers]
\label{rem:separation-quantifiers}
Theorem~\ref{thm:normal-pr-fs-separation} gives one sequence that defeats all
finite-state predictors simultaneously.  Theorem~\ref{thm:fixed-pr-full-measure-separation}
has the unavoidable order
\[
    \forall P\in\mathcal P_{\mathrm{PR}}\ \exists\mathcal A_{P,\varepsilon},
\]
but gives a stronger asymptotic gap, approaching one, on a full-measure class
for a computable nonatomic full-support source.  Finally,
Theorem~\ref{thm:high-hierarchy-pr-witnesses} shows that the separating
phenomenon is already present on primitive-recursive sequences carrying
arbitrarily high primitive-recursive growth.  Together, these results explain
the advantage of aggregation: no individual PR expert, not even PPM, captures
the full predictive power available across the PR class.
\end{remark}

\section{Scope, Feasibility, and Further Directions}
\label{sec:discussion}

The primitive-recursive class is introduced as an effective comparison class,
not as a claim of practical efficiency.  A primitive-recursive program may
require more time and space than any realistic implementation, and the
aggregate evaluates an increasing collection of such programs.  The theorem
therefore establishes computability and asymptotic universality rather than a
useful complexity bound.  This distinction is analogous to other universal
constructions whose main role is to identify the attainable benchmark.

Nevertheless, the class is operationally meaningful in three respects.  First,
only finitely many programs are evaluated at each round.  Second, every
activated computation is guaranteed to halt.  Third, the class contains many
standard practical predictors, including finite-state rules, finite-order
Markov estimators, context-tree algorithms, and PPM.  The same architecture can
also be applied to narrower syntactically certified classes, such as bounded
levels of the Grzegorczyk hierarchy, loop programs with restricted nesting, or
explicit time-bounded experts.  Those refinements trade expressiveness for
feasibility while retaining effective aggregation.

A natural extension concerns horizon-dependent context classes.  The
fixed-threshold method of Ziv and Merhav~\cite{ZivMerhav2007} suggests building
an effectively enumerable family of horizon-independent rational-valued
primitive-recursive forecasters and aggregating that family to compete with
the best context tree under every sublinear state budget.  This requires a
careful rational implementation and a separate redundancy analysis, and is
left to future work.

A second natural extension is hidden nonhomogeneous Markov switching.  For a
fixed primitive-recursive family of rational transition kernels and a fixed
finite family of state-dependent Markov estimators, one can construct a
primitive-recursive expert that performs exact a posteriori filtering by
dynamic programming over hidden states, source contexts, and sufficient
statistics.  An outer aggregate can then mix over all primitive-recursive model
descriptions.  The computation is finite but enormous.  Proving a clean
stationary-ergodic or Martin-L\"of-random optimality theorem after latent-state
marginalization requires additional universal-filtering analysis and is left
for future work.

A third direction is resource-bounded prediction.  Primitive recursion
provides syntactic totality but no quantitative control.  Replacing it with
polynomial-time, polynomial-space, or explicitly clocked experts would connect
the present results with feasible pseudorandomness and online learning.  A fourth direction is to replace Hamming loss by log loss or general bounded proper losses.  The aggregation argument extends directly, while the relationship with PPM and algorithmic complexity may become even tighter.

Finally, the optimized error density studied here should be distinguished from the structure of the innovation sequence.  A predictor that is optimal under Hamming loss need not expose a residual with the desired randomness or degree properties, and a deliberately nonoptimal predictor may produce a highly random residual.  The corresponding innovation spectrum is developed in the
companion paper \cite{LeshemInnovation2026}.
\section{Conclusion}

Primitive recursion provides a broad class of total prediction rules that is
both syntactically enumerable and expressive enough to contain the principal
classical universal predictors used in this paper.  A computable
PR-superpredictor competes with every member of this class on every individual
sequence, although no PR-computable predictor can possess this universal
property.  This locates a sharp boundary: primitive-recursive experts are
uniformly aggregatable by a computable master, but the master must leave the
class that it aggregates.

The construction is also compatible with probabilistic source theory.
Because PPM is a primitive-recursive expert, the aggregate attains the Bayes
Hamming error for every computable stationary ergodic source on each
Martin-L\"of random sample path.  Beyond stationarity, an arbitrary
primitive-recursive schedule may interleave finitely many independent
computable stationary ergodic components.  Without being given the schedule
or the component laws, the same aggregate attains
\[
    \sum_j N_j(N)e_j+o(N),
\]
the cumulative statewise Bayes benchmark, and no computable predictor can
asymptotically do better.  The absence of any assumption on limiting state
frequencies makes this a genuinely nonstationary optimality theorem rather
than a stationary result in an enlarged state space.

The separation results show that the gain from aggregation is genuine.  A
single primitive-recursive signal can separate PR prediction from the entire
finite-state class simultaneously.  At the broader PR level, the unavoidable
quantifier order is different, but the gap is stronger: for every fixed PR
predictor there is a computable nonatomic full-support source on which that
predictor has asymptotic error $1-\varepsilon$, while the PR-superpredictor has
error $\varepsilon$,for $\varepsilon$ arbitrarily small rational number.
Primitive-recursive separating sequences can moreover encode functions from
arbitrarily high levels of a fixed primitive-recursive hierarchy.  Thus no distinguished PR predictor, including PPM, can substitute for aggregation over the class.

The aggregation mechanism is not specific to primitive recursion.  Let
$\mathcal C$ be any syntactically effectively enumerable class of total
predictors whose outputs are uniformly computable from their descriptions.
The same deferred-activation construction yields a computable aggregate with
sublinear regret relative to every fixed member of $\mathcal C$ on every
individual sequence.  This applies, for example, to finite-state predictors,
predictors with a prescribed bound on the number of states, and
polynomial-time or other resource-bounded predictors represented by explicitly
clocked machines.  For finite expert classes, including standard deterministic
fixed-state classes, the usual finite-expert bounds apply directly.

The accompanying diagonal argument also extends whenever the class is closed
under thresholding and under the adversarial recursion induced by one of its
predictors.  This includes the class of finite-state predictors, even under a
fixed state bound, and the class of polynomial-time predictors when all
polynomial running times are allowed.  In such cases, no predictor belonging
to the class can compete universally with every other member of that same
class.  For more restrictive resource bounds, the diagonal construction may
incur an additional computational overhead and therefore locate the required
master in a slightly larger class.

Taken together, the results support a computationally stratified view of
individual-sequence prediction.  Finite-state and context-tree benchmarks
occupy the lower levels; primitive recursion supplies a much broader, still
effectively aggregatable level with strong stationary and nonstationary source
optimality; and unrestricted total computability lies beyond direct uniform
aggregation by any computable predictor.  More generally, every uniformly
evaluable, effectively enumerable class of total predictors admits a
computable aggregate, while diagonalization determines whether that aggregate
can remain within the class it aggregates.
\section*{Acknowledgment}

The author used ChatGPT by OpenAI as an interactive research and writing
assistant during the development of this work, including for 
testing and refining arguments, organizing proofs, and improving the
exposition. All mathematical claims, proofs, references, and final formulations
were independently reviewed and verified by the author, who takes full
responsibility for the content of the paper.

\appendices

\section*{Guide to the Technical Appendices}

The first two appendices contain the principal technical foundations of the
paper.  Appendix~\ref{app:pr-background} summarizes primitive-recursive
functions, effective enumeration, and the hierarchy conventions used in the
separation results.  Appendix~\ref{app:exp-weights} gives the complete
sleeping-expert, exponential-weights, learning-rate, and dyadic-window
derivations underlying Theorem~\ref{thm:pr-aggregation}.

The remaining appendices contain the longer implementation and
source-theoretic arguments deferred from the main text.  They treat primitive
recursiveness of PPM, the effective Bayes lower bound, primitive-recursive
switching, and the detailed separation constructions.  Every deferred proof
is referenced at the point where it is used.

\section{Primitive-Recursive Functions and Hierarchy Conventions}
\label{app:pr-background}

For completeness, we collect the recursion-theoretic facts used in the main
text.  The primitive-recursive functions are the smallest class of functions
on the nonnegative integers containing the zero function, the successor
function, and all coordinate projections, and closed under composition and
primitive recursion.  In the latter operation, functions $g$ and $h$ define
$f$ by
\begin{align}
    f(\mathbf x,0)&=g(\mathbf x),\\
    f(\mathbf x,n+1)&=h(\mathbf x,n,f(\mathbf x,n)).
\end{align}
The recursion performs a number of iterations explicitly bounded by an input
argument.  Since the initial functions are total and both closure operations
preserve totality, every primitive-recursive function is defined on every
input.  Addition, multiplication, exponentiation, bounded sums and products,
bounded minimization, finite-state simulation, and the standard coding and
manipulation of finite strings are primitive recursive; see
\cite{Rogers1967,Odifreddi1989}.

Primitive recursion is not a practical resource bound.  A
primitive-recursive function may grow faster than every elementary function
or every fixed tower of exponentials.  To formulate the hierarchy statement
in Theorem~\ref{thm:high-hierarchy-pr-witnesses}, we fix once and for all a
standard Grzegorczyk-Ackermann hierarchy
$\mathcal E^0,\mathcal E^1,\ldots$ with the conventional closure under
composition and bounded recursion.  Indexing conventions for the lowest
levels vary, but only two invariant properties are used: every
primitive-recursive function belongs to some finite level, and the levels are
closed, from a fixed finite stage onward, under the coding, pairing,
exponentiation, addition, multiplication, and quotient operations appearing
in the residue construction.  Equivalently, one may work with a standard
Ackermann hierarchy $A_0,A_1,\ldots$ in which every primitive-recursive
function is eventually dominated by some fixed level, whereas the diagonal
function $A(n):=A_n(n)$ is total computable but not primitive recursive.
These growth statements provide structural hierarchy information only; they
do not imply practical feasibility.

Primitive-recursive definitions are finite syntactic objects and can be
enumerated effectively.  Thus there is an enumeration
\[
    f_0,f_1,f_2,\ldots
\]
in which every $f_e$ is total and every primitive-recursive function appears.
There is a total computable evaluator
\[
    U(e,\mathbf x)=f_e(\mathbf x).
\]
No such universal evaluator can itself be primitive recursive.  Indeed, if
$U$ were primitive recursive, then
\[
    d(n):=U(n,n)+1
\]
would be primitive recursive and hence equal to $f_e$ for some $e$, giving
$d(e)=U(e,e)+1=f_e(e)+1=d(e)+1$.

Fixing a primitive-recursive code for finite binary strings, a predictor is
primitive recursive exactly when its output bit is a primitive-recursive
function of the code of the observed prefix.  Length, concatenation,
prefix extraction, coordinate extraction, bounded scans, and conversion
between a $k$-bit block and its integer code are all primitive recursive.
Consequently, finite-state prediction, fixed context-tree prediction, and the
bounded searches used in the high-hierarchy construction remain inside the
class.  The uniform evaluator of the entire expert class is computable but,
by the preceding diagonal argument, lies outside primitive recursion.  The
prediction-specific strengthening of this fact is
Lemma~\ref{lem:superpredictor-not-pr}.

\section{Exponential-Weights and Dyadic-Window Derivations}
\label{app:exp-weights}

This appendix gives a self-contained proof of
\eqref{eq:adaptive-pr-regret}.  The derivation treats four features
simultaneously: a countable prior, sleeping experts, a nonincreasing
learning-rate sequence, and the dyadic-window estimates that produce the
explicit $O(\!\sqrt N)$ remainder.

Fix an individual sequence $X$ and write $x_n=X_n$.  At round $n$, expert
$e$ gives a probabilistic advice $p_{e,n}\in[0,1]$.  An active 
expert has
\[
    p_{e,n}=P_e(X_0^{n-1}),
\]
whereas a sleeping expert is assigned the learner's own advice,
$p_{e,n}=q_n$.  Its conditional expected zero-one loss is
\begin{equation}
    \ell_{e,n}
       :=\ell(p_{e,n},x_n)
       =p_{e,n}(1-x_n)+(1-p_{e,n})x_n.
    \label{eq:appendix-expert-loss}
\end{equation}
Thus $\ell_{e,n}=p_{e,n}$ when $x_n=0$ and
$\ell_{e,n}=1-p_{e,n}$ when $x_n=1$.  Put
\[
    L_{e,n}:=\sum_{t<n}\ell_{e,t}.
\]
In the notation of the main proof, $L_{e,n}=\widetilde L_{e,n}$.

For a fixed positive learning rate $\eta$, define the exponential potential
and its normalized logarithm by
\begin{equation}
    Z_n(\eta)
       :=\sum_{e=0}^{\infty}\pi_e e^{-\eta L_{e,n}},
    \qquad
    F_n(\eta):=-\frac{1}{\eta}\log Z_n(\eta).
    \label{eq:appendix-potential}
\end{equation}
The series converges absolutely and has a finite positive value because
$0<e^{-\eta L_{e,n}}\leq1$ and $\sum_e\pi_e=1$.  With the round-dependent learning rate $\eta_n$, the
normalized exponential weights are
\begin{equation}
    \alpha_{e,n}
       :=\frac{\pi_e e^{-\eta_n L_{e,n}}}{Z_n(\eta_n)}.
    \label{eq:appendix-normalized-weights}
\end{equation}
At an epoch boundary the accumulated losses are not reset.  If
$\rho_r:=\eta_r/\eta_{r-1}=2^{-1/2}$, then for every previously active
expert $e<r$,
\[
    \pi_e
    \exp\bigl(-\eta_r\widetilde L_{e,\tau_r}\bigr)
    =
    \pi_e^{\,1-\rho_r}
    \left[
        \pi_e
        \exp\bigl(-\eta_{r-1}\widetilde L_{e,\tau_r}\bigr)
    \right]^{\rho_r}.
\]
Thus changing the learning rate re-tempers the preceding exponential
weights while preserving all accumulated loss information.
Conceptually, the full countable mixture predicts
$q_n=\sum_e\alpha_{e,n}p_{e,n}$.  If $A_n$ and $S_n$ denote the active and
sleeping experts, respectively, then
\[
 q_n=\sum_{e\in A_n}\alpha_{e,n}p_{e,n}
       +q_n\sum_{e\in S_n}\alpha_{e,n}.
\]
Since $A_n$ is nonempty, cancellation gives
\[
 q_n=
 \frac{\sum_{e\in A_n}\alpha_{e,n}p_{e,n}}
      {\sum_{e\in A_n}\alpha_{e,n}},
\]
which is exactly the finite formula in
\eqref{eq:adaptive-pr-mixture}.  Therefore inactive experts need not be
computed.

Let
\[
    \widehat\ell_n:=\ell(q_n,x_n)
\]
be the PR-superpredictor's conditional expected loss.  The soft zero-one
loss is affine in its first argument.  Consequently,
\begin{align}
    \widehat\ell_n
      &=\ell\!\left(\sum_e\alpha_{e,n}p_{e,n},x_n\right) \notag\\
      &=\sum_e\alpha_{e,n}
           \bigl[p_{e,n}(1-x_n)+(1-p_{e,n})x_n\bigr] \notag\\
      &=\sum_e\alpha_{e,n}\ell_{e,n}.
    \label{eq:appendix-loss-mixture}
\end{align}
This identity is the reason that randomized prediction under expected
zero-one loss fits the exponential-weights analysis without an additional
convexity approximation.

We next compare consecutive potentials while keeping the current learning
rate fixed.  From \eqref{eq:appendix-potential} and
\eqref{eq:appendix-normalized-weights},
\begin{equation}
    \frac{Z_{n+1}(\eta_n)}{Z_n(\eta_n)}
       =\sum_e\alpha_{e,n}e^{-\eta_n\ell_{e,n}}.
    \label{eq:appendix-potential-ratio}
\end{equation}
Hoeffding's lemma for a random variable supported on $[0,1]$ gives
\[
 \log\sum_e\alpha_{e,n}e^{-\eta_n\ell_{e,n}}
 \leq
 -\eta_n\sum_e\alpha_{e,n}\ell_{e,n}
 +\frac{\eta_n^2}{8}.
\]
Using \eqref{eq:appendix-loss-mixture}, define the mix loss
\begin{equation}
    m_n:=-\frac1{\eta_n}
       \log\frac{Z_{n+1}(\eta_n)}{Z_n(\eta_n)}.
    \label{eq:appendix-mix-loss}
\end{equation}
Then
\begin{equation}
    \widehat\ell_n\leq m_n+\frac{\eta_n}{8}.
    \label{eq:appendix-one-step-bound}
\end{equation}

It remains to telescope the mix losses despite the changing learning rate.
For $0<\eta'\leq\eta$, let $a=\eta'/\eta\in(0,1]$.  Since $u\mapsto u^a$
is concave,
\[
 Z_n(\eta')
   =\sum_e\pi_e\bigl(e^{-\eta L_{e,n}}\bigr)^a
   \leq
   \left(\sum_e\pi_e e^{-\eta L_{e,n}}\right)^a
   =Z_n(\eta)^a.
\]
Therefore
\begin{equation}
    F_n(\eta')\geq F_n(\eta)
    \qquad(0<\eta'\leq\eta).
    \label{eq:appendix-potential-monotonicity}
\end{equation}
Moreover,
\[
    m_n=F_{n+1}(\eta_n)-F_n(\eta_n).
\]
Because $\eta_n$ is nonincreasing,
\eqref{eq:appendix-potential-monotonicity} yields
\begin{align}
    \sum_{n=0}^{N-1}m_n
      &=\sum_{n=0}^{N-1}
          \bigl[F_{n+1}(\eta_n)-F_n(\eta_n)\bigr] \notag\\
      &\leq F_N(\eta_{N-1})-F_0(\eta_0) \notag\\
      &=F_N(\eta_{N-1}),
    \label{eq:appendix-telescoping}
\end{align}
where $F_0(\eta_0)=0$ because $L_{e,0}=0$ and $\sum_e\pi_e=1$.
Summing \eqref{eq:appendix-one-step-bound} and applying
\eqref{eq:appendix-telescoping} gives
\begin{equation}
    L_N(Q_{\mathrm{PR}},X)
       \leq
    F_N(\eta_{N-1})+\frac18\sum_{n<N}\eta_n.
    \label{eq:appendix-master-potential-bound}
\end{equation}

For any fixed expert $e$,
\[
 Z_N(\eta_{N-1})
   \geq \pi_e e^{-\eta_{N-1}L_{e,N}},
\]
and hence
\begin{equation}
    F_N(\eta_{N-1})
       \leq
    L_{e,N}+\frac{\log(1/\pi_e)}{\eta_{N-1}}.
    \label{eq:appendix-comparator-potential}
\end{equation}
Combining \eqref{eq:appendix-master-potential-bound} and
\eqref{eq:appendix-comparator-potential}, and recalling that
$L_{e,N}=\widetilde L_{e,N}$, gives
\[
    L_N(Q_{\mathrm{PR}},X)-\widetilde L_{e,N}
    \leq
    \frac{\log(1/\pi_e)}{\eta_{N-1}}
    +\frac18\sum_{n<N}\eta_n .
\]

For $N>\tau_e$, the sleeping convention gives
\begin{equation}
    \widetilde L_{e,N}
       = L_{\tau_e}(Q_{\mathrm{PR}},X)
        +L_N(P_e,X)-L_{\tau_e}(P_e,X).
    \label{eq:appendix-sleeping-decomposition}
\end{equation}
Since
\[
    L_{\tau_e}(Q_{\mathrm{PR}},X)
    -L_{\tau_e}(P_e,X)
    \leq \tau_e,
\]
substitution into the preceding sleeping-expert bound proves
\eqref{eq:adaptive-pr-regret}.

Finally, let $2^R\leq N<2^{R+1}$.  Since $N-1\in I_R$,
\[
    \eta_{N-1}^{-1}=2^{R/2}\leq\sqrt N.
\]
Each complete epoch $I_r$ contains $2^r$ rounds, so
\begin{align}
    \sum_{n<N}\eta_n
       &\leq\sum_{r=0}^{R}2^r2^{-r/2}
        =\sum_{r=0}^{R}2^{r/2} \notag\\
       &\leq \frac{\sqrt2}{\sqrt2-1}\,2^{R/2}
        \leq (2+\sqrt2)\sqrt N.
    \label{eq:appendix-dyadic-rate-sum}
\end{align}
Together with $\tau_e=2^e-1<2^e$ and
\[
    \log(1/\pi_e)=\log((e+1)(e+2)),
\]
substitution into \eqref{eq:adaptive-pr-regret} yields
\eqref{eq:pr-mixture-regret}.

\section{Primitive Recursiveness of PPM}
\label{app:ppm-pr}

This appendix contains the complete proof deferred from the main text.

\begin{proof}[Proof of Lemma~\ref{lem:ppm-pr}]
Fix a nonempty string \(\sigma\in2^n\).  By the definition of the
order-\(k\) components,
\[
    R_k(\sigma)=2^{-n}
    \qquad\text{for every }k\geq n-1.
\]
Consequently,
\[
\begin{aligned}
    R_{\mathrm{PPM}}(\sigma)
    &=
    \sum_{k=0}^{n-2}
       \frac{R_k(\sigma)}{(k+1)(k+2)}
    +
    2^{-n}
    \sum_{k=n-1}^{\infty}
       \frac{1}{(k+1)(k+2)}.
\end{aligned}
\]
The second sum telescopes:
\[
\begin{aligned}
    \sum_{k=n-1}^{\infty}
       \frac{1}{(k+1)(k+2)}
    &=
    \sum_{k=n-1}^{\infty}
       \left(
          \frac{1}{k+1}-\frac{1}{k+2}
       \right)\\
    &=
    \frac{1}{n}.
\end{aligned}
\]
Hence
\begin{equation}
    R_{\mathrm{PPM}}(\sigma)
    =
    \sum_{k=0}^{n-2}
       \frac{R_k(\sigma)}{(k+1)(k+2)}
    +
    \frac{2^{-n}}{n},
    \label{eq:ppm-finite-tail}
\end{equation}
where the finite sum is empty when \(n=1\).

For each \(k\leq n-2\), every substring count occurring in
\(R_k(\sigma)\) is obtained by a bounded scan through \(\sigma\).
The products in the numerator and denominator are bounded products of
positive integers.  Thus the numerator and denominator of
\(R_k(\sigma)\) are primitive-recursive functions of \(k\) and the code
of \(\sigma\).  Equation~\eqref{eq:ppm-finite-tail} expresses
\(R_{\mathrm{PPM}}(\sigma)\) using only a bounded sum, bounded products,
integer arithmetic, and an explicit rational tail.  Therefore
\(R_{\mathrm{PPM}}(\sigma)\) has a rational representation whose
numerator and denominator are primitive recursive in the code of
\(\sigma\).

Finally, \(R_{\mathrm{PPM}}(\sigma0)\) and
\(R_{\mathrm{PPM}}(\sigma1)\) are exact positive rational numbers.
Their comparison can be performed by cross multiplication of their
primitive-recursive numerators and denominators.  It follows that
\(P_{\mathrm{PPM}}\) is primitive recursive.
\end{proof}

\section{Effective Bayes Lower Bound}
\label{app:effective-bayes-lower-bound}

This appendix contains the complete proof deferred from the main text.

\begin{proof}[Proof of Lemma~\ref{lem:probabilistic-bayes-lower-bound}]
Let
\[
    \mathbf X=(\mathbf X_n)_{n\geq0}
\]
denote the canonical coordinate process on $2^\omega$ under $\mu$, and
let
\[
    \mathcal F_n
    :=
    \sigma(\mathbf X_0,\ldots,\mathbf X_{n-1}).
\]
Define
\[
    q_n:=Q(\mathbf X_0^{n-1}),
    \qquad
    Z_n:=\ell(q_n,\mathbf X_n).
\]
Thus $Z_n$ is the realized soft loss of $Q$ at time $n$, and
$0\leq Z_n\leq1$.

For a finite string $\sigma\in2^n$ with $\mu[\sigma]>0$, define
\[
    p(\sigma)
    :=
    \mu(\mathbf X_n=1\mid\mathbf X_0^{n-1}=\sigma)
    =
    \frac{\mu[\sigma1]}{\mu[\sigma]}
\]
and
\[
    r(\sigma)
    :=
    p(\sigma)(1-Q(\sigma))
    +(1-p(\sigma))Q(\sigma).
\]
Choose the following version of the conditional expected loss:
\[
    r_n
    :=
    \mathbb E_\mu[Z_n\mid\mathcal F_n].
\]
On every positive-measure cylinder $[\sigma]$ of length $n$, set
\[
    r_n(\mathbf x)=r(\sigma),
    \qquad
    \mathbf x\in[\sigma],
\]
and define $r_n$ arbitrarily on cylinders of $\mu$-measure zero.  This
does not affect its status as a version of the conditional expectation.

Set
\[
    D_n:=Z_n-r_n,
    \qquad
    S_N:=\sum_{n<N}D_n.
\]
Then $D_n$ is $\mathcal F_{n+1}$-measurable and
\[
    \mathbb E_\mu[D_n\mid\mathcal F_n]=0.
\]
Hence $(S_N,\mathcal F_N)_{N\geq0}$ is a martingale.  Moreover,
$|D_n|\leq1$.  If $m<n$, then $D_m$ is $\mathcal F_n$-measurable, and
therefore
\[
\begin{aligned}
    \mathbb E_\mu[D_mD_n]
    &=
    \mathbb E_\mu\!\left[
       D_m\,\mathbb E_\mu[D_n\mid\mathcal F_n]
    \right]\\
    &=0.
\end{aligned}
\]
Thus the martingale differences are orthogonal in $L^2(\mu)$, and
\[
\begin{aligned}
    \mathbb E_\mu[S_N^2]
    &=
    \sum_{n<N}\mathbb E_\mu[D_n^2]\\
    &\leq N.
\end{aligned}
\]
In particular,
\[
    \mathbb E_\mu[S_N]=0,
    \qquad
    \operatorname{Var}_\mu(S_N)\leq N.
\]
The relation between martingale convergence, information divergence, and Shannon-McMillan-Breiman-type behavior has also been studied from an
information-theoretic perspective by Harremo\"es \cite{HarremoesMartingales2005}.  Our use of martingales differs in requiring an effective pathwise conclusion on every Martin-L\"of random realization.
We next prove effectively that
\[
    \frac{S_N(X)}{N}\longrightarrow0.
\]
For a string
\[
    \sigma=\sigma_0\cdots\sigma_{N-1}\in2^N
\]
with $\mu[\sigma]>0$, define
\[
    s_N(\sigma)
    :=
    \sum_{n<N}
    \left[
       \ell\bigl(Q(\sigma_0^{n-1}),\sigma_n\bigr)
       -r(\sigma_0^{n-1})
    \right].
\]
Since $\mu[\sigma]>0$, every prefix of $\sigma$ also has positive
$\mu$-measure, so all the terms in this expression are well defined.

The random variable $S_N$ is $\mathcal F_N$-measurable and therefore
constant on every cylinder of length $N$.  More precisely, whenever
$\mu[\sigma]>0$,
\begin{equation}
    S_N(\mathbf x)=s_N(\sigma)
    \qquad
    \text{for every }\mathbf x\in[\sigma].
    \label{eq:martingale-cylinder-value}
\end{equation}

Fix an integer $j\geq1$.  For each $k\geq1$, define
\[
    U_{k,j}
    :=
    \bigcup
    \left\{
       [\sigma]:
       \sigma\in2^{k^2},\
       \mu[\sigma]>0,\
       |s_{k^2}(\sigma)|>\frac{k^2}{j}
    \right\}.
\]
We claim that $(U_{k,j})_{k\geq1}$ is uniformly effectively open.  Since
$\mu$ is computable, the condition $\mu[\sigma]>0$ is semidecidable:
one searches for a positive rational lower bound for $\mu[\sigma]$.
Once such a bound has been found, every ratio
\[
    \frac{\mu[\tau1]}{\mu[\tau]},
    \qquad
    \tau\preceq\sigma,
\]
is computable uniformly from $\mu$ and $\sigma$.  Since $Q$ is total 
computable, it follows that $s_{k^2}(\sigma)$ is a computable real,
uniformly in $k$ and $\sigma$.  The defining inequality is strict, so
\[
    |s_{k^2}(\sigma)|>\frac{k^2}{j}
\]
is semidecidable.  Hence the cylinders forming $U_{k,j}$ can be
enumerated effectively.

By~\eqref{eq:martingale-cylinder-value}, $U_{k,j}$ agrees, outside a
finite union of zero-measure cylinders, with the martingale-deviation
event
\[
    \left\{
       \mathbf x:
       |S_{k^2}(\mathbf x)|>\frac{k^2}{j}
    \right\}.
\]
Consequently,
\[
    \mu(U_{k,j})
    =
    \mu\left\{
       |S_{k^2}|>\frac{k^2}{j}
    \right\}.
\]
Since $\mathbb E_\mu[S_{k^2}]=0$ and
\[
    \operatorname{Var}_\mu(S_{k^2})
    \leq k^2,
\]
Chebyshev's inequality gives
\[
\begin{aligned}
    \mu(U_{k,j})
    &\leq
    \frac{
       \operatorname{Var}_\mu(S_{k^2})
    }{
       k^4/j^2
    }\\
    &\leq
    \frac{j^2}{k^2}.
\end{aligned}
\]
The bounds have a computable summable tail, since, for $K\geq2$,
\[
    \sum_{k=K}^{\infty}\frac{j^2}{k^2}
    \leq
    \frac{j^2}{K-1}.
\]
The effective Borel-Cantelli lemma
\cite[Proposition~3]{DebowskiSteifer2022} therefore implies that every
$X\in\operatorname{MLR}_\mu$ belongs to only finitely many of the sets
$U_{k,j}$.

Every prefix of a $\mu$-Martin-L\"of random sequence has positive
$\mu$-measure.  Hence, for the fixed sequence $X$,
\[
    X\in U_{k,j}
    \quad\Longleftrightarrow\quad
    |S_{k^2}(X)|>\frac{k^2}{j}.
\]
It follows that, for every fixed $j\geq1$,
\[
    \frac{|S_{k^2}(X)|}{k^2}\leq\frac1j
\]
for all sufficiently large $k$.  Since $j$ is arbitrary,
\[
    \frac{S_{k^2}(X)}{k^2}\longrightarrow0.
\]

To pass from the quadratic subsequence to all times, let
\[
    k^2\leq N<(k+1)^2.
\]
Since $|D_n|\leq1$,
\[
    |S_N-S_{k^2}|
    \leq N-k^2
    \leq2k.
\]
Therefore
\[
    \frac{|S_N(X)|}{N}
    \leq
    \frac{|S_{k^2}(X)|}{k^2}
    +\frac{2k}{k^2}
    \longrightarrow0.
\]
We have proved that
\begin{equation}
    \frac1N
    \sum_{n<N}
    \bigl(Z_n(X)-r_n(X)\bigr)
    \longrightarrow0.
    \label{eq:martingale-loss-second-moment}
\end{equation}

For the fixed sequence $X\in\operatorname{MLR}_\mu$, put
\[
    p_n
    :=
    \mu(X_n=1\mid X_0^{n-1})
    =
    \frac{\mu[X_0^{n-1}1]}
         {\mu[X_0^{n-1}]},
    \qquad
    q_n:=Q(X_0^{n-1}).
\]
Then
\[
    r_n(X)
    =
    p_n(1-q_n)+(1-p_n)q_n.
\]
For every $p,q\in[0,1]$,
\[
    p(1-q)+(1-p)q
    \geq
    \min\{p,1-p\},
\]
because randomization cannot improve upon the deterministic Bayes
action for a fixed binary conditional distribution.  Hence
\begin{equation}
    r_n(X)\geq\min\{p_n,1-p_n\}.
    \label{eq:soft-loss-bayes-lower}
\end{equation}

The effective L\'evy law and the effective Breiman ergodic theorem,
applied as in the proof of the effective source-prediction theorem
\cite[Theorem~5]{DebowskiSteifer2022}, imply that
\begin{equation}
    \frac1N
    \sum_{n<N}\min\{p_n,1-p_n\}
    \longrightarrow e_\mu
    \label{eq:finite-past-bayes-limit}
\end{equation}
on every $X\in\operatorname{MLR}_\mu$.

Finally, by
\eqref{eq:martingale-loss-second-moment},
\eqref{eq:soft-loss-bayes-lower}, and
\eqref{eq:finite-past-bayes-limit},
\[
\begin{aligned}
    \frac{L_N(Q,X)}{N}
    &=
    \frac1N\sum_{n<N}Z_n(X)\\
    &=
    \frac1N\sum_{n<N}r_n(X)+o(1)\\
    &\geq
    \frac1N
    \sum_{n<N}\min\{p_n,1-p_n\}
    +o(1).
\end{aligned}
\]
Therefore
\[
    \liminf_{N\to\infty}
       \frac{L_N(Q,X)}{N}
    \geq e_\mu.
\]

All computations, effective open-set enumerations, probability bounds,
and effective ergodic arguments above are uniform relative to an
arbitrary oracle $A$.  Thus, if $Q$ is computable relative to $A$ and
\[
    X\in\operatorname{MLR}_\mu^A,
\]
the same conclusion holds.
\end{proof}

\section{Proof of the Primitive-Recursive Switching Theorem}
\label{app:pr-switching-proof}

This appendix contains the complete proof deferred from the main text.

\begin{proof}[Proof of Theorem~\ref{thm:pr-switched-ergodic}]
For a finite string $\sigma$ of length $n$ and $j<r$, let
\[
    \sigma^{[j]}
       :=
    \bigl\langle \sigma_t:t<n,\ s(t)=j\bigr\rangle
\]
be the subsequence previously observed while state $j$ was active.  Since
$s$ is primitive recursive, the map
\[
    (j,\sigma)\longmapsto\sigma^{[j]}
\]
is primitive recursive: it is obtained by bounded search through the
positions $t<|\sigma|$.

Define the state-aware predictor
\begin{equation}
    P_s(\sigma)
       :=
    P_{\mathrm{PPM}}
       \bigl(\sigma^{[s(|\sigma|)]}\bigr).
    \label{eq:state-aware-ppm}
\end{equation}
By Lemma~\ref{lem:ppm-pr}, $P_{\mathrm{PPM}}$ is primitive recursive.
Hence $P_s\in\mathcal P_{\mathrm{PR}}$.

For every $N$, the loss of $P_s$ decomposes exactly as
\begin{equation}
    L_N(P_s,X)
       =
    \sum_{j<r}
       L_{N_j(N)}
          \bigl(P_{\mathrm{PPM}},Y^{(j)}\bigr).
    \label{eq:statewise-loss-decomposition}
\end{equation}
A computable projection of a Martin-L\"of random point is Martin-L\"of
random for the corresponding pushforward measure.  Therefore each
$Y^{(j)}$ is Martin-L\"of random for $\mu_j$.  PPM universality gives
\[
    L_m(P_{\mathrm{PPM}},Y^{(j)})
       =
    me_j+o(m)
\]
along every component whose visit count tends to infinity.  Components
visited only finitely often contribute $O(1)$.  Since the number of
states is finite, \eqref{eq:statewise-loss-decomposition} implies
\begin{equation}
    L_N(P_s,X)
       =
    \sum_{j<r}N_j(N)e_j+o(N).
    \label{eq:state-aware-upper}
\end{equation}
Theorem~\ref{thm:pr-aggregation}, applied to the fixed
primitive-recursive expert $P_s$, now yields
\begin{equation}
    L_N(Q_{\mathrm{PR}},X)
       \leq
    \sum_{j<r}N_j(N)e_j+o(N).
    \label{eq:aggregate-switched-upper}
\end{equation}

It remains to prove the matching lower bound.  Fix a computable
probabilistic predictor $Q$.  For each state $j$ that occurs infinitely
often, let
\[
    n_{j,0}<n_{j,1}<\cdots
\]
be the computable increasing enumeration of the times at which
$s(n)=j$, and let
\[
    Z^{(j)}:=\bigoplus_{i\neq j}Y^{(i)}
\]
code all component streams other than the $j$-th one.  From an oracle for
$Z^{(j)}$, a prefix $Y^{(j)}_0,\ldots,Y^{(j)}_{k-1}$, and the schedule $s$,
one can reconstruct the complete interleaved prefix $X_0^{n_{j,k}-1}$.
Consequently,
\[
    Q_j^{Z^{(j)}}((Y^{(j)})_0^{k-1})
       :=
    Q(X_0^{n_{j,k}-1})
\]
defines a probabilistic predictor that is computable relative to
$Z^{(j)}$.

Let
\[
    \nu_j:=\bigotimes_{i\neq j}\mu_i,
\]
where the product is viewed as a measure on the join
\[
    Z^{(j)}=\bigoplus_{i\neq j}Y^{(i)}.
\]
A computable permutation and grouping of the coordinates transforms
$\boldsymbol{\mu}$ into
\[
    \nu_j\otimes\mu_j
\]
on the ordered pair $(Z^{(j)},Y^{(j)})$.  Since computable
measure-preserving coordinate permutations preserve Martin-L\"of
randomness, the pair
\[
    (Z^{(j)},Y^{(j)})
\]
is Martin-L\"of random with respect to
$\nu_j\otimes\mu_j$.  The generalized van Lambalgen
theorem~\cite[Theorem~3.3]{Takahashi2011} therefore implies
\[
    Y^{(j)}
    \in\operatorname{MLR}_{\mu_j}^{Z^{(j)}}.
\]
The induced predictor $Q_j^{Z^{(j)}}$ is computable relative to the same
oracle $Z^{(j)}$.  Hence the relativized Bayes lower bound gives
\begin{equation}
    \liminf_{m\to\infty}
    \frac{
       L_m(Q_j^{Z^{(j)}},Y^{(j)})-m e_j
    }{m}
    \geq 0.
    \label{eq:relativized-component-lower}
\end{equation}
For a state visited only finitely often, its total contribution is $O(1)$.
The loss of $Q$ on the interleaved sequence decomposes exactly into its
losses on the active component rounds:
\[
    L_N(Q,X)
       =
    \sum_{j<r}
       L_{N_j(N)}
          (Q_j^{Z^{(j)}},Y^{(j)}).
\]
Since there are only finitely many states, summing
\eqref{eq:relativized-component-lower} at $m=N_j(N)$ yields
\[
    L_N(Q,X)
       \geq
    \sum_{j<r}N_j(N)e_j-o(N).
\]
This proves \eqref{eq:pr-switched-lower}.  Applying it to
$Q_{\mathrm{PR}}$ and combining it with
\eqref{eq:aggregate-switched-upper} proves
\eqref{eq:pr-switched-optimality}.  The statement with limiting state
frequencies follows after division by $N$.
\end{proof}

\section{Tailored Full-Support Source for a Fixed Predictor}
\label{app:fixed-pr-separation-proof}

This appendix contains the complete proof deferred from the main text.

\begin{proof}[Proof of Theorem~\ref{thm:fixed-pr-full-measure-separation}]
Let $Z$ have the Bernoulli-$\varepsilon$ measure and define $X$ recursively by
\begin{equation}
    X_n=P^\perp(X_0^{n-1})\mathbin\oplus Z_n.
    \label{eq:fixed-pr-tailored-source}
\end{equation}
The causal map
\[
    F_P:2^\omega\to2^\omega,
    \qquad
    F_P(Z)=X,
\]
defined recursively by~\eqref{eq:fixed-pr-tailored-source}, is a
computable homeomorphism.  Its inverse is given by
\[
    Z_n
    =
    X_n\mathbin\oplus P^\perp(X_0^{n-1}).
\]

Let $\beta_\varepsilon$ denote the Bernoulli-$\varepsilon$ measure, and
define the measure $\nu_{P,\varepsilon}$ on $2^\omega$ by
\[
    \nu_{P,\varepsilon}(A)
    :=
    \beta_\varepsilon\!\left(F_P^{-1}(A)\right)
\]
for every Borel set $A\subseteq2^\omega$.

Since $F_P$ is bijective and $\beta_\varepsilon$ is nonatomic, for every
$X\in2^\omega$,
\[
\begin{aligned}
    \nu_{P,\varepsilon}(\{X\})
    &=
    \beta_\varepsilon\!\left(F_P^{-1}(\{X\})\right)\\
    &=
    \beta_\varepsilon\!\left(\{F_P^{-1}(X)\}\right)\\
    &=0.
\end{aligned}
\]
Thus $\nu_{P,\varepsilon}$ is nonatomic.

Moreover, if $U\subseteq2^\omega$ is nonempty and open, then
$F_P^{-1}(U)$ is nonempty and open.  Since $\beta_\varepsilon$ has full
support,
\[
    \nu_{P,\varepsilon}(U)
    =
    \beta_\varepsilon\!\left(F_P^{-1}(U)\right)
    >
    0.
\]
Hence $\nu_{P,\varepsilon}$ has full support.

Finally, $\nu_{P,\varepsilon}$ is computable.  Indeed, the causal
homeomorphism induces a computable length-preserving bijection on finite
binary strings.  Hence, for every finite string $\sigma$, there is a
uniformly computable string $\tau_\sigma$ of the same length such that
\[
    F_P^{-1}([\sigma])=[\tau_\sigma].
\]
Therefore
\[
    \nu_{P,\varepsilon}([\sigma])
    =
    \beta_\varepsilon([\tau_\sigma])
\]
is computable uniformly in $\sigma$.
If $X\in\operatorname{MLR}_{\nu_{P,\varepsilon}}$, randomness conservation
under computable measure-preserving maps implies that the recovered sequence
$Z$ is Bernoulli-$\varepsilon$ Martin-L\"of random.  Equation
\eqref{eq:fixed-pr-tailored-source} gives
\[
    \mathbf 1\{P^\perp(X_0^{n-1})\neq X_n\}=Z_n.
\]
Therefore the effective strong law yields
\[
    \frac{L_N(P^\perp,X)}{N}\longrightarrow\varepsilon.
\]
Since $P$ and $P^\perp$ make opposite predictions at every round,
\[
    L_N(P,X)+L_N(P^\perp,X)=N,
\]
which proves the first two limits.

The PR-superpredictor competes with the fixed PR expert $P^\perp$, and hence
\[
    \limsup_{N\to\infty}
       \frac{L_N(Q_{\mathrm{PR}},X)}{N}
       \leq\varepsilon.
\]
Under $\nu_{P,\varepsilon}$, conditional on the observed past, the Bayes
prediction is $P^\perp(X_0^{n-1})$ and the conditional Bayes error is
$\varepsilon$.  As in the proof of
Theorem~\ref{thm:normal-pr-fs-separation}, the effective martingale lower
bound applies to every computable probabilistic predictor.  In particular,
\[
    \liminf_{N\to\infty}
       \frac{L_N(Q_{\mathrm{PR}},X)}{N}
       \geq\varepsilon.
\]
This proves the third limit and the claimed gap.
\end{proof}

\section{Size of the Fixed-Predictor Separation Class}
\label{app:fixed-pr-separation-size}

This appendix contains the complete proof deferred from the main text.

\begin{proof}[Proof of Proposition~\ref{prop:fixed-pr-separation-size}]
Consider the causal innovation map
\[
    T_{P^\perp}(X)(n)
       :=X_n\mathbin\oplus P^\perp(X_0^{n-1}).
\]
It is a length-preserving homeomorphism of Cantor space, and
$\mathcal A_{P,\varepsilon}$ is the inverse image under $T_{P^\perp}$ of the
class of binary sequences whose limiting frequency of ones is
$\varepsilon$.  The latter class is dense, has cardinality continuum, and is
meager.  These properties are preserved by a homeomorphism.

On every finite level, $T_{P^\perp}$ is a permutation of the binary strings
of that length.  It therefore preserves fair-coin measure.  The ordinary
strong law gives fair-coin measure zero to the class with limiting frequency
$\varepsilon\neq1/2$, proving the third assertion.  By construction,
$\nu_{P,\varepsilon}$ is the pullback under $T_{P^\perp}$ of the
Bernoulli-$\varepsilon$ measure, so the Bernoulli strong law gives the second
assertion.

Finally, for $X\in\mathcal A_{P,\varepsilon}$, the identity
$L_N(P,X)+L_N(P^\perp,X)=N$ gives
$L_N(P,X)/N\to1-\varepsilon$, while the PR-superprediction property applied
to $P^\perp$ gives
\[
    \limsup_{N\to\infty}
       \frac{L_N(Q_{\mathrm{PR}},X)}{N}
       \leq\varepsilon.
\]
Subtracting proves the last assertion.
\end{proof}

\section{High-Hierarchy Primitive-Recursive Witnesses}
\label{app:high-hierarchy-witnesses}

This appendix contains the complete proof deferred from the main text.

\begin{proof}[Proof of Theorem~\ref{thm:high-hierarchy-pr-witnesses}]
Let $Z\in2^\omega$ be a primitive-recursive normal sequence.  Define $X$
recursively by
\begin{equation}
    X_n=P(X_0^{n-1})\mathbin\oplus Z_n.
    \label{eq:normal-innovation-diagonal}
\end{equation}
Because $P$ and $Z$ are primitive recursive, so is $X$.  By construction,
\[
    I_P^X=Z.
\]
Normality of $Z$ implies that its frequency of ones is $1/2$, and therefore
$L_N(P,X)/N\to1/2$.

Let
\[
    S_k:=\sum_{j=1}^{k}j2^j,
    \qquad S_0:=0.
\]
For $i<2^k$, let $r(k,i)<2^k$ be the integer whose $k$-bit expansion is
\[
    X_{S_{k-1}+ki}^{S_{k-1}+k(i+1)-1}.
\]
For $i\geq2^k$, and also for $k=0$, put $r(k,i)=0$.  Since $X$ is
primitive recursive, the residue table $r$ is primitive recursive.  Define
\[
    f(\langle k,i\rangle)
       :=2^kG(\langle k,i\rangle)+r(k,i).
\]
Then $f$ is primitive recursive and its Champernowne-style residue encoding
is exactly $X$.  Moreover,
\[
    G(\langle k,i\rangle)
       =
    \left\lfloor
       \frac{f(\langle k,i\rangle)}{2^k}
    \right\rfloor,
\]
so $G$ is uniformly recoverable from $f$.

Finally, the position predictor $P_X(\sigma)=X_{|\sigma|}$ is primitive
recursive and has zero loss on $X$.  The PR-superprediction property therefore
gives
\[
    L_N(Q_{\mathrm{PR}},X)=o(N).
\]
If $f$ belonged to a hierarchy level $\mathcal E^m$ closed under the displayed
recovery operations, then so would $G$.  This proves the last assertion.
\end{proof}


\end{document}